\documentclass[headings=standardclasses]{scrartcl}

\usepackage[T1]{fontenc}
\usepackage{lmodern}
\usepackage[british]{babel}
\usepackage{authblk}
\usepackage{amsmath}
\usepackage{amssymb}
\usepackage{subfiles}
\usepackage{amsthm}
\usepackage{mathtools}
\usepackage{physics}
\usepackage{hyperref} 
\usepackage{cleveref}
\usepackage{xr-hyper}
\usepackage{svg}
\usepackage{siunitx}
\usepackage{algorithm}
\usepackage{algorithmic}
\usepackage{booktabs}
\usepackage{makecell}
\usepackage{multirow}
\usepackage{csquotes} 
\usepackage[backend=biber, style=ieee]{biblatex}
\usepackage{graphicx}
\usepackage[]{todonotes}
\usetikzlibrary{shapes, arrows, positioning, calc}
\usepackage{upgreek}


\externaldocument[sup-]{sup}

\sisetup{
	table-number-alignment = center,
	table-figures-integer = 2,
	table-figures-decimal = 2,
}

\addbibresource{global.bib}

\newcommand\e{\mathord{\,\mathrm{e}}}
\renewcommand\j{\mathord{\,\mathrm{j}\mkern1mu}}

\DeclareMathOperator*{\argmin}{arg\,min}

\DeclareMathOperator{\arctantwo}{arctan2}

\newcommand{\mat}[1]{\mathbf{#1}}
\renewcommand{\vec}[1]{\boldsymbol{\mathrm{#1}}}

\newtheorem{theorem}{Theorem}

\newtheorem{corollary}{Corollary}

\setlength{\parindent}{0pt} 

\newcommand{\printtitle}[2]{%
  \clearpage
  \begin{center}
  {\LARGE\bfseries #1\par}
  \vspace{1em}
  {\large #2\par}
  \vspace{1em}
  \end{center}
}

\begin{document}

    \title{Optimal Calibration of the Endpoint-corrected Hilbert Transform}
    \author{Eike Osmers\thanks{Corresponding author: \href{mailto:eike.osmers@tu-berlin.de}{eike.osmers@tu-berlin.de}}~}
    \author{Dorothea Kolossa}
    \affil{\parbox{\linewidth}{\centering
    		Technische Universität Berlin\protect\\
    		Faculty IV - Electrical Engineering and Computer Science\protect\\
    		Institute of Energy and Automation Technology\protect\\
    		Electronic Systems of Medical Engineering}}
    
    \date{\today}

	\maketitle
	
	\begin{abstract}
		\noindent Accurate, low-latency estimates of the instantaneous phase of oscillations are essential for closed-loop sensing and actuation, including (but not limited to) phase-locked neurostimulation and other real-time applications. The endpoint-corrected Hilbert transform (ecHT) reduces boundary artefacts of the Hilbert transform by applying a causal narrow-band filter to the analytic spectrum. This improves the phase estimate at the most recent sample. Despite its widespread empirical use, the systematic endpoint distortions of ecHT have lacked a principled, closed-form analysis. In this study, we derive the ecHT endpoint operator analytically and demonstrate that its output can be decomposed into a desired positive-frequency term (a deterministic complex gain that induces a calibratable amplitude/phase bias) and a residual leakage term that sets an irreducible variance floor. This yields (i) an explicit characterisation and bounds for endpoint phase/amplitude error, (ii) a mean-squared-error-optimal scalar calibration, and (iii) practical design rules relating window length, filter bandwidth and order, and centre-frequency mismatch to residual bias via an endpoint group delay. The resulting calibrated ecHT achieves near-zero mean phase error and remains computationally compatible with real-time pipelines. Code and analyses are provided at \url{https://github.com/eikeosmers/cecHT}.
	\end{abstract}
	
	\section*{Introduction}
	
	
	Real-time access to the phase of neural oscillations enables causal experiments and closed-loop stimulation, where stimuli are synchronised with a rhythm's phase. In neuroscience, this capability is central to closed-loop stimulation paradigms, in which the timing relative to ongoing brain rhythms can be controlled experimentally~\cite{grossAnalyticalMethodsExperimental2014,zrennerClosedLoopBrainStimulation2024}. More broadly, the same technical requirement arises in any real-time system that must estimate the instantaneous phase of a narrow-band process within tight latency constraints, such as phase synchronisation in power electronics, rotor phase estimation in electromechanical control systems, or carrier phase tracking in navigation and communication receivers. In all such settings, the estimator must produce a phase estimate in real time, with latency small compared to the signal's timescale, while the signal is still evolving.\\
	
	The vanilla approach is to construct the analytic signal using the Hilbert transform (HT) and take its argument to determine the instantaneous phase. In discrete-time implementations, this is typically achieved using a discrete Fourier transform (DFT)-based Hilbert transform applied to a short, sliding window. The DFT implicitly assumes that this window repeats periodically. However, when the signal is not exactly periodic across the window, as depicted in Fig.~\ref{fig:intro}, there is a discontinuity between the last and first sample. The Fourier representation of this discontinuity produces Gibbs ringing (overshoot and oscillations), which leads to large amplitude and phase distortions at the window boundaries, as seen in Fig.~\hyperref[fig:intro]{\ref*{fig:intro}f}. As real-time systems typically use only the most recent sample in the window (i.e.~the last one), the Gibbs phenomenon directly corrupts the most important estimate.\\
	
	\begin{figure}[!h]
		\centering
		\includegraphics[width=\columnwidth, trim={0 3em 0 1em}]{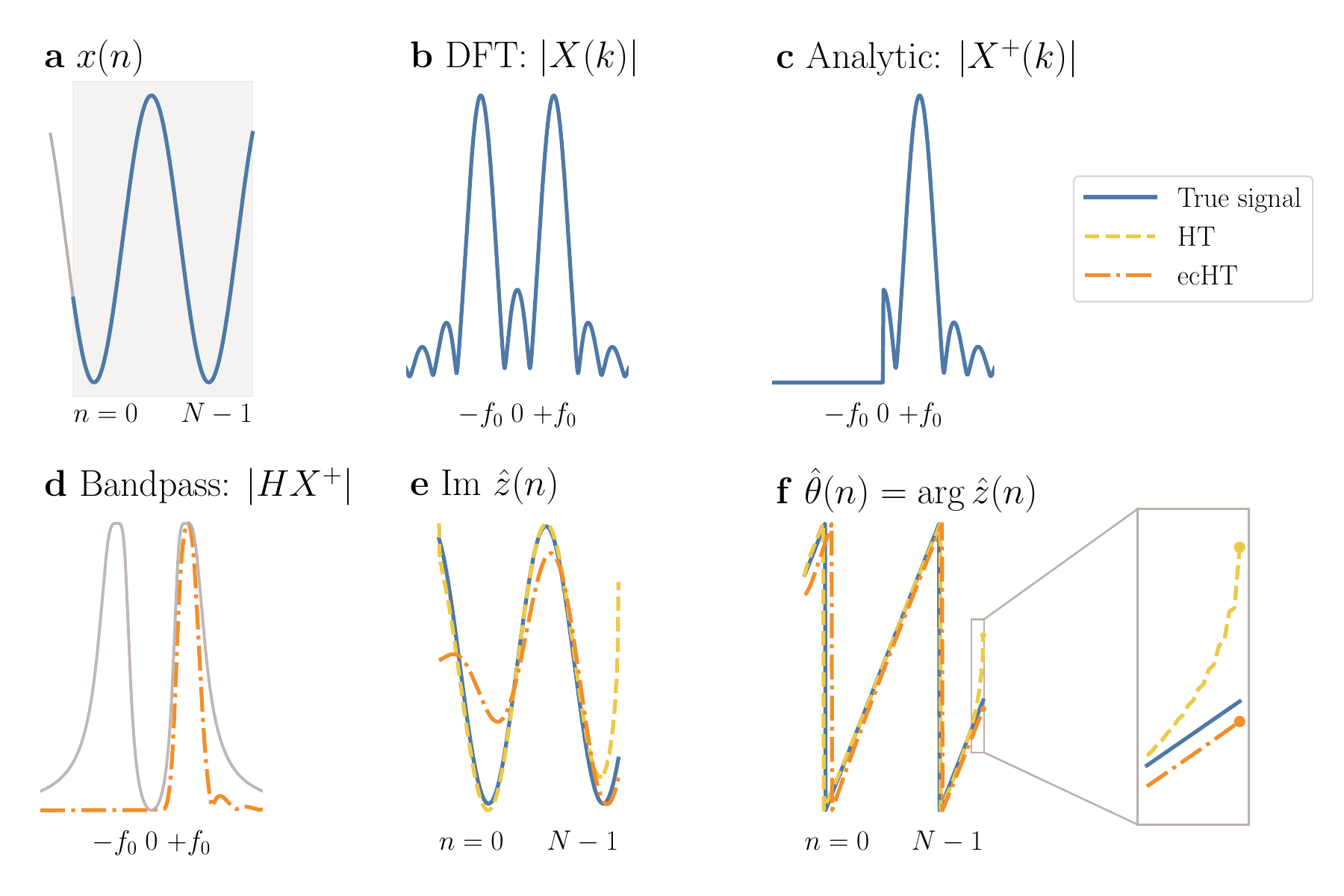}
		\caption{\textbf{Visual concept of the ecHT algorithm.} \textbf{a} Finite window cut from a longer signal $x$; the DFT assumes periodic repetition, creating a discontinuity at the boundaries. \textbf{b} DFT magnitude $\abs{X}$ with positive and negative sidelobes. \newline
		\textbf{c} Analytic signal spectrum $X^+$ after Hilbert mask removes negative frequencies. \textbf{d} The ecHT multiplies $X^+$ with a narrow-band Butterworth filter $H$ centred at $f_0$. \textbf{e} Time-domain imaginary parts: classical Hilbert transform (yellow) shows endpoint ringing; ecHT (orange) is closer to the true signal (blue). \textbf{f} Instantaneous phase near the endpoint: the ecHT reduces the phase error of the online Hilbert transform at the last sample but has itself a bias.}
		\label{fig:intro}
	\end{figure}
	
	The endpoint-corrected Hilbert Transform (ecHT) was introduced by Schregl\-mann et al.~\cite{schreglmannNoninvasiveSuppressionEssential2021} as a modification of the classical HT for such online use. The basic idea is to apply a causal, narrow-band bandpass filter to the analytic spectrum centred at the frequency of interest. In the time domain, this filtering reduces the high-frequency components associated with the sharp transition between successive windows. This effectively smooths the periodic continuation of the signal and minimises the Gibbs-induced ringing at the window edges. As the filter is causal and evaluated in a sliding fashion, the ecHT provides a recursive, sample-by-sample phase estimate, with the phase at the last sample of each window being the quantity of interest.\\
	
	One might try to mitigate boundary artefacts using simpler techniques such as zero-padding or windowing. However, zero-padding moves the discontinuity to the transition between data and padding, and does not systematically reduce the endpoint error. Tapered windows (e.g.~Hann) reduce spectral leakage by attenuating the edges of the window; however, in an online setting, they downweight the most recent sample. Symmetric windows or forward-backward filtering can greatly reduce phase distortion, but they require access to future samples and are therefore acausal.\\
	
	More recently, the ecHT has been applied to the real-time tracking of phases in a variety of biological signals, including essential tremor and alpha or theta rhythms. In these applications, the ecHT has enabled phase-locked stimulation using wearable EEG systems, and has thus served to demonstrate phase-dependent modulation of evoked responses during wakefulness and sleep~\cite{bresslerWearableEEGSystem2023,bresslerRandomizedControlledTrial2024,harlowIndividualizedClosedLoopAcoustic2024,hebronClosedloopAuditoryStimulation2024,jaramilloClosedloopAuditoryStimulation2024,vinao-carlJustPhaseCausal2024}. Busch et al.~\cite{buschRealtimePhaseAmplitude2022} observed that the ecHT introduces a distinctive phase shift and Liufu et al.~\cite{liufuOptimizingRealtimePhase2025} conducted a comprehensive evaluation of the ecHT across various signals. Their study revealed that the ecHT consistently delivers the most accurate and precise phase-locked outputs among several real-time methods. They also found that performance is strongly influenced by signal-to-noise ratio (SNR), amplitude and frequency variability, and window length.\\
	
	Beyond the ecHT, a variety of other real-time phase estimators have been proposed, including adaptive filtering approaches~\cite{zrennerRealtimeEEGdefinedExcitability2018,shirinpourExperimentalEvaluationMethods2020}, waveform-based and spectrogram-based methods~\cite{madsenNoTracePhase2019,shirinpourExperimentalEvaluationMethods2020,kimEEGPhaseCan2023}, state-space models~\cite{onojimaStateinformedStimulationApproach2021,wodeyarStateSpaceModeling2021,wodeyarDifferentMethodsEstimate2023,makarovaHardwareenabledLowLatency2025}, resonance-based loops~\cite{santostasiPhaselockedLoopPrecisely2016,rosenblumRealtimeEstimationPhase2021,buschRealtimePhaseAmplitude2022,choDevelopmentEvaluationRealTime2024}, and learned predictors~\cite{mcintoshEstimationPhaseEEG2020,liuNovelRealtimePhase2025}. Comparative studies suggest that many of these methods can achieve reliable phase tracking under appropriate conditions~\cite{shirinpourExperimentalEvaluationMethods2020,fersterBenchmarkingRealTimeAlgorithms2022}. State-space and resonance-based approaches can provide good tracking and handle several non-stationarities, but they typically require model fitting or careful parameter tuning. Spectral and time-frequency methods offer high flexibility across frequencies; however, their finite analysis window introduces a fundamental trade-off between temporal and frequency resolution. Waveform-based predictors also face real-time constraints: filter design and window length must be balanced to achieve an optimal combination of computational speed, latency, and phase-estimation accuracy. Learned predictors can adapt flexibly, but they require substantial training data and are more difficult to analyse theoretically. In this context, ecHT is a low-complexity, FFT-based method that is easy to implement and run in real time.\\
	
	
	Here, we show that the endpoint-corrected Hilbert transform can be analytically decomposed, separating systematic bias from irreducible phase-dependent errors. Despite its widespread use, previous research into the ecHT has largely been empirical. While existing studies have validated the ecHT experimentally, they have not provided a systematic derivation of its spectral properties, endpoint errors, or calibration strategies. We provide an analytical DFT-based characterisation of the ecHT, express its error analytically, and demonstrate how a single scalar can deliver a mean-squared-error (MSE)-optimal calibration. We refer to the calibrated variant of the ecHT as the \emph{cecHT}. We then translate these results into practical design rules concerning window length, filter bandwidth and order, and centre-frequency estimation. These rules allow the ecHT to achieve real-time phase estimates with mean errors of practically zero degrees under realistic conditions, as confirmed in simulations and real data.
	
	\section*{Results}
	
	\subsection*{Optimal calibration}
	
	We show in Supplementary Theorem~\ref{thm:scalar-calibration} under 
	mild assumptions 
	that for any ecHT implementation, there exists a unique complex scalar $C_\text{opt}$ that minimises the mean-squared error
	\begin{align}
		J(C) = \mathbb{E}\left[\,\abs*{C\hat Z - Z}^2\,\right]
	\end{align}
	between the calibrated ecHT output and the ideal analytic signal. $C_\text{opt}$ is obtained by
	\begin{align}
		C_{\mathrm{opt}} &= \frac{\mathbb{E}[\hat Z^* Z]}{\mathbb{E}[\abs*{\smash{\hat Z}}^2]},
		\label{eq:Copt-general}\quad \text{with}\\
		J(C_{\mathrm{opt}}) &= \mathbb{E}[\abs{Z}^2]
		- \frac{\abs*{\mathbb{E}[\hat Z^{\ast} Z]}^2}{\mathbb{E}[\abs*{\hat Z}^2]}
		\;\le\; J(1) = \mathbb{E}[\abs*{\hat Z - Z}^2],
		\label{eq:Jmin-general-main}
	\end{align}
	where $Z$ is a random variable capturing the true endpoint and $\hat Z^*$ denotes the complex-conjugate ecHT estimate. Geometrically, multiplication by $C_\text{opt}$ is an orthogonal projection of $Z$ onto the one-dimensional subspace spanned by $\hat Z$ in the Hilbert space of complex random variables; it is the scalar Wiener filter that extracts the maximally linearly recoverable information determined by the squared magnitude of the correlation coefficient
	\begin{align}
		\rho_{Z\hat Z}
		= \frac{\mathbb{E}[\hat Z^{\ast} Z]}
		{\sqrt{\mathbb{E}[\abs*{\smash{\hat Z}}^2]\mathbb{E}[\abs{Z}^2]}},
	\end{align}
	and substituting into Eq.~\eqref{eq:Jmin-general-main} gives
	\begin{align}
		J(C_{\mathrm{opt}}) &= \mathbb{E}[\abs{Z}^2]\left(1 - \abs{\rho_{Z\hat Z}}^2\right).
	\end{align}
	Correlation (not calibration) sets the fundamental accuracy limit of the cecHT, and all sources of distortion impact performance only insofar as they reduce this correlation.\\
	
	For the narrow-band single-tone case of Eq.~\eqref{eq:x(n)}, this result specialises to
	\begin{align}
		C_{\mathrm{opt}} = \frac{G_+^*}{\abs{G_+}^2 + \abs{G_-}^2},
		\label{eq:Copt-single-tone}
	\end{align}
	which depends only on the positive-frequency gain $G_+$ and the negative-frequency leakage $G_-$ (c.f.~Supplementary Theorem~\ref{thm:single-tone-calibration}). Both quantities can be pre-computed analytically for any choice of window length, filter parameters, and centre frequency, requiring no data-driven fitting or iterative optimisation.\\
	
	While calibration eliminates the bias encoded in $G_+$, it cannot remove errors that arise from the leakage-dependent ripple $G_- \e^{-\j2\varphi_0}$. After calibration, the minimal achievable mean-squared error is
	\begin{align}
		J(C_{\mathrm{opt}}) = \frac{\abs{G_-}^2}{\abs{G_+}^2 + \abs{G_-}^2}.
		\label{eq:Jmin-single-tone}
	\end{align}
	This bound is deterministic and cannot be improved by averaging or post-processing as it reflects the fundamental information loss induced by finite-window spectral leakage.
	
	
	\subsection*{Properties of the calibrated ecHT}
	\begin{enumerate}
		\item \textbf{Universality} The calibration coefficient $C_{\text{opt}}$ is the unique optimal linear estimator regardless of the signal model. For stationary processes beyond single tones, calibration extracts the maximum linearly recoverable signal energy from the ecHT output. We further show that the empirical estimation of $C_\text{opt}$ from $M$ independent windows converges at the rate $1/\sqrt{M}$ (Supplementary Corollary~\ref{thm:asymptotics}), providing a data-driven calibration pathway.
		\item \textbf{Parameter-independence} The ecHT error depends on the length of the window, filter bandwidth and order. After calibration, the bias phase error becomes practically negligible, as seen in Fig.~\ref{fig:simulation}. Therefore, these parameters can be chosen to suit other practical needs.
		\item \textbf{Computational efficiency} Once the calibration parameters $G_+$ and $G_-$ are computed, calibration reduces to multiplication by a precomputed complex number. This adds negligible overhead to the $\mathcal{O}(L \log L)$ FFT-based ecHT, preserving real-time compatibility.
	\end{enumerate}
	
	The combination of analytical transparency, minimal computational cost, and clear design rules makes the cecHT a practical drop-in replacement for existing ecHT pipelines. Our publicly available implementation provides a single Python class that computes $C_\text{opt}$ and applies it automatically, requiring no user intervention beyond a specification of the target frequency and ecHT parameters.
	
	\begin{table}[!ht]
		\centering
		\caption{Phase and amplitude error statistics before and after calibration.}
		\label{tab:schreglmann}
		\begin{tabular}{@{}lSSScSSS@{}}
			\toprule
			\multirow{2.5}{*}{Error} 
			& \multicolumn{3}{c}{Phase [\si{\degree}]} 
			& \multicolumn{3}{c}{Amplitude [\%]} \\ 
			\cmidrule(lr){2-4}\cmidrule(lr){5-7}
			& \multicolumn{1}{c}{Mean} 
			& \multicolumn{1}{c}{Std.Dev.} 
			& \multicolumn{1}{c}{Max} 
			& \multicolumn{1}{c}{Mean} 
			& \multicolumn{1}{c}{Std.Dev.} 
			& \multicolumn{1}{c}{Max} \\ 
			\midrule
			ecHT~\cite{schreglmannNoninvasiveSuppressionEssential2021} 
			& \multicolumn{1}{r}{8.53}  & 1.68 & 11.70 
			& \multicolumn{1}{r}{4.07}  & 2.73 & 8.87 \\
			cecHT 
			& \multicolumn{1}{r}{0.40} & 0.32 & 1.23  
			& \multicolumn{1}{r}{0.70} & 0.55 & 2.20 \\ 
			\bottomrule
		\end{tabular}
	\end{table}

	\subsection*{Simulation}
	
	\begin{figure}[!h]
		\centering
		\includegraphics[width=\columnwidth, trim={0 3em 0 1em}]{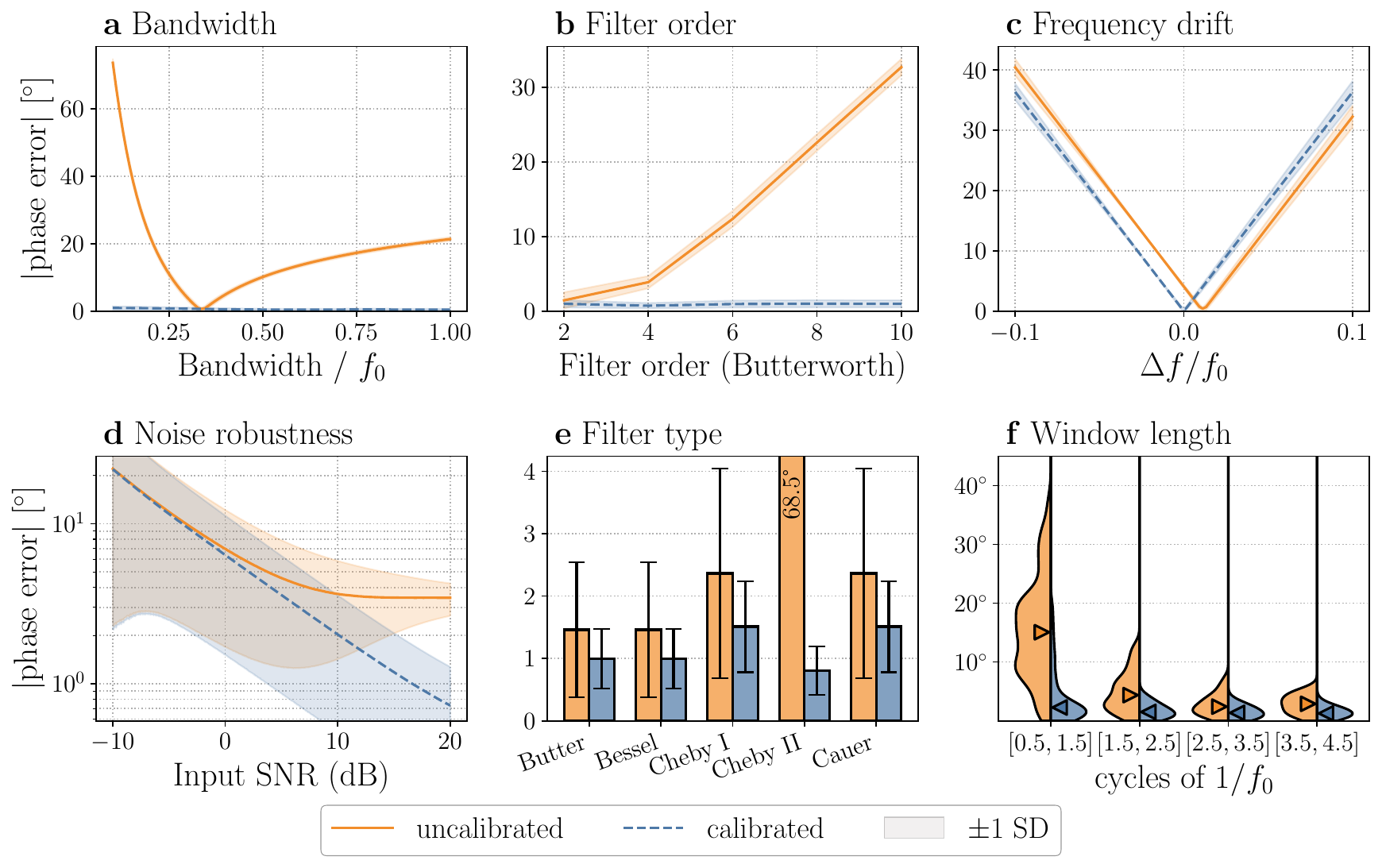}
		\caption{\textbf{Parameter dependence of the ecHT endpoint phase error and effect of scalar calibration in single-tone simulations.} Absolute endpoint phase error (deg) is shown for ecHT (orange) and cecHT (blue), with solid lines giving the mean and shaded bands indicating $\pm 1$ SD. If not noted differently: $f_0=\SI{10}{Hz}$, $F_s=\SI{256}{Hz}$, $N=\operatorname{round}(2.1 F_s/f_0)$, $\text{BW}=[0.7f_0,\, 1.3f_0]$, filter order 2 (lowpass and highpass order 1).~Window length is chosen slightly imperfectly to simulate realistic conditions. Across all panels, the analytic single-tone calibration reduces systematic phase bias while preserving the variability predicted by the complex-gain and leakage analysis. \textbf{a} Dependence on relative filter bandwidth. \textbf{b} Filter order. \textbf{c} Frequency detuning $\Delta f/f_0$. \textbf{d} Input SNR ($\log_{10} y$-axis). \textbf{e} Comparison of IIR filter families (Butterworth, Bessel, Chebyshev I \& II, Cauer/elliptic; bars show mean, error bars SD). \textbf{f} Distribution of absolute endpoint error for non-integer window lengths spanning 1-4 cycles around $f_0$ (half-violins for uncalibrated vs calibrated, markers denote medians).}
		\label{fig:simulation}
	\end{figure}
	 
	Schreglmann et al.~\cite{schreglmannNoninvasiveSuppressionEssential2021} proposed the ecHT, evaluating it through a tone sweep from 2 to \SI{3}{Hz}, using a sampling frequency of $F_s=\SI{256}{Hz}$ and a window length of $N = 256$. The bandpass filter had a bandwidth of \SI{1.5}{Hz} and was centred around \SI{3}{Hz}. The reference phase is defined as the offline phase of the analytic signal~\cite{zrennerShakyGroundTruth2020}. Their reported results can be reproduced and substantially improved through per-frequency calibration, as shown in Table~\ref{tab:schreglmann}. At \SI{2}{Hz} an error of \ang{8.81} would result in a timing offset of \SI{12.2}{ms}.\\
	
	We evaluated the performance of ecHT and cecHT for the case of a single tone at $f_0 = \SI{10}{Hz}$ across varying parameter configurations represented in Fig.~\ref{fig:simulation}. Across all experiments, cecHT consistently outperforms ecHT. Further, Fig.~\ref{fig:simulation} shows how the ecHT, as a parameter-dependent algorithm, becomes virtually independent of parameter choice through calibration. Therefore, any free parameters can be chosen to fit a specific application (e.g.~window length according to frequency stability or computational budget, and filter parameters according to noise estimates or frequency stability). The presented ecHT results are similar to~\cite{schreglmannNoninvasiveSuppressionEssential2021,bresslerWearableEEGSystem2023,liufuOptimizingRealtimePhase2025}. When the ecHT parameters are optimally chosen, the mean error can approach zero without calibration; calibration, however, ensures that this happens consistently.
	
	\subsection*{EEG alpha phase}
	
	\begin{figure}[!ht]
		\centering
		\includegraphics[width=\columnwidth, trim={0 3em 0 1em}]{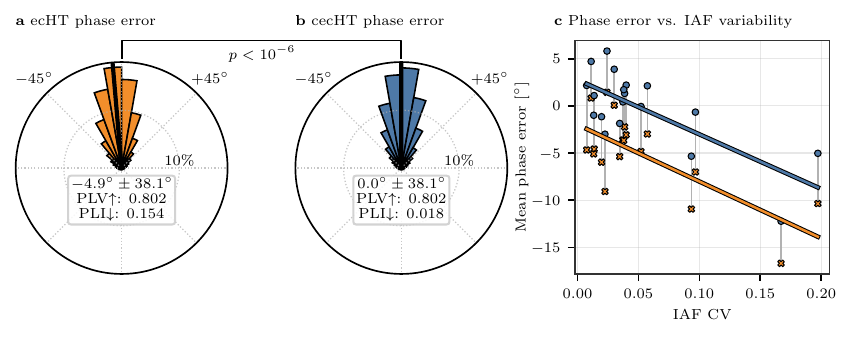}
		\caption{\textbf{Causal alpha-phase estimation with the ecHT and cecHT on the EEG Alpha Waves Dataset.} $F_s=\SI{512}{Hz}$, $N=262$, $\text{BW}=[0.75f_0,\, 1.25f_0]$, filter order 2. \textbf{a} The ecHT shows a systematic phase bias and elevated PLI in the distribution of alpha-band phase errors across all samples. \newline \textbf{b} Applying the analytic calibration removes this bias, preserving PLV, and substantially reducing PLI. \textbf{c} Comparison between per-recording phase error and the coefficient of variation (CV) of the IAF. Across recordings, the cecHT attenuates the dependence of mean phase error on individual alpha-frequency variability, indicating improved robustness to spectral instability.}
		\label{fig:hmc}
	\end{figure}
	
	Data from the \emph{EEG Alpha Waves Dataset} consisting of recordings from 20 participants with eyes-open and -closed segments, was chosen~\cite{cattanEEGAlphaWaves2018}, as the eyes-open/closed paradigm provides a strong modulation of alpha power, offering a reliable benchmark for evaluating whether the phase estimation algorithm correctly tracks the individual alpha frequency (IAF), which was estimated from the Oz-M1 channel using FOOOF~\cite{donoghueParameterizingNeuralPower2020} by fitting an aperiodic ($1/f^\beta$) component and a periodic peak above the noise. Ground truth phase estimates were obtained by applying an offline Hilbert transform to acausally filtered data, using a second-order bandpass filter with cutoffs at 0.75 and 1.25 times the centre frequency. Parameters for the ecHT were chosen according to Bressler et al.~\cite{bresslerWearableEEGSystem2023} and the ecHT error is again comparable to that reported in the literature~\cite{bresslerWearableEEGSystem2023,bresslerRandomizedControlledTrial2024,hebronClosedloopAuditoryStimulation2024}. As Fig.~\ref{fig:hmc} shows, the cecHT reduced the absolute mean error significantly ($p<10^{-6}$) from \ang{4.9} to \ang{0.0}, while retaining the same variance. Significance was tested using a non-parametric circular pairwise permutation test~\cite{goodPermutationParametricBootstrap2005}. The phase-locking value (PLV)~\cite{lachauxMeasuringPhaseSynchrony1999} stayed consistent, but a lower phase-lag index (PLI)~\cite{stamPhaseLagIndex2007} shows a reduction in phase asymmetry.
	
	\subsection*{Tremor phase}
	
	\begin{figure}[!ht]
		\centering
		\includegraphics[width=\columnwidth, trim={0 3em 0 1em}]{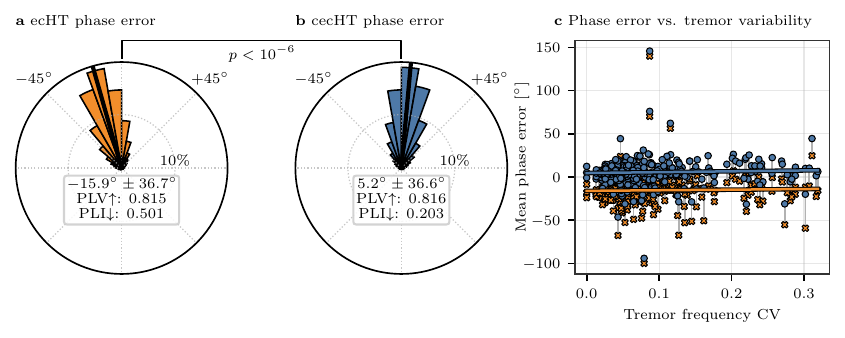}
		\caption{\textbf{Causal phase estimation with the ecHT and cecHT on tremor data.} $F_s\approx \SI{500}{Hz}$, $N=128$, $\text{BW}=[0.5f_0,\, 1.5f_0]$, filter order 4 \textbf{a} The ecHT shows a systematic phase bias and elevated PLI in the distribution of phase errors across all samples. \textbf{b} Applying the analytic calibration removes the bulk of this bias, preserves phase locking (PLV), and substantially reduces PLI. \newline \textbf{c} Comparing per-recording phase error with the coefficient of variation (CV) of the tremor frequency. Across recordings, the cecHT attenuates the dependence of mean phase error on frequency variability, indicating improved robustness to spectral instability.}
		\label{fig:tremor}
	\end{figure}
	
	Schreglmann et al.~\cite{schreglmannNoninvasiveSuppressionEssential2021} investigated how essential tremor (ET) can be reduced by phase-locked transcranial electrical stimulation. Tremor signals are lower in frequency compared to the alpha band. In their experiments, the phase of the ET was calculated using the ecHT, and the patients were then stimulated at pre-defined phases. We estimated the signal phase with the original parameters from~\cite{schreglmannNoninvasiveSuppressionEssential2021}, i.e., a filter bandwidth ranging from $0.5f_0$ to $1.5f_0$, a filter order of 4, and a window length of 128 samples. The calibration results in Fig.~\ref{fig:tremor} show a statistically significant ($p<10^{-6}$) improvement in mean phase error from \ang{-15.9} to \ang{5.2}.\\
	
	Most of the residual post-calibration error stems from an unstable $f_0$. Since $f_0$ can be tracked, calibration can be performed using the most recent estimate. We re-estimated $f_0$ approximately every \SI{4}{s} from the tremor signal using a simple spectral peak estimate, then re-centred the bandpass and re-computed the analytic calibration for the updated $f_0$. This isolates the residual error attributable to leakage/noise from the bias caused by centre-frequency drift, reducing the mean error to \ang{0.0} and slightly lowering the standard deviation, as depicted in Fig.~\ref{fig:tremor_track}. For this application, the system can now be considered fully calibrated.
	
	\begin{figure}[!h]
		\centering
		\includegraphics[width=\columnwidth, trim={0 3em 0 1em}]{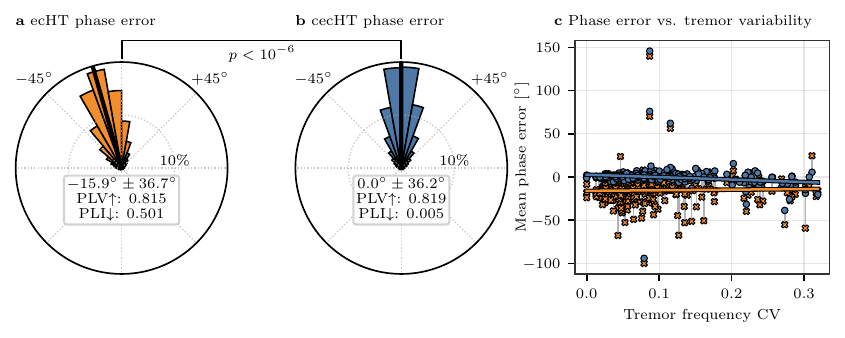}
		\caption{\textbf{Causal tremor-phase estimation with ecHT and cecHT on the tremor data with centre-frequency tracking.} \textbf{a} With centre-frequency tracking but without calibration, ecHT shows a similar behaviour as in Fig.~\ref{fig:tremor}. \newline \textbf{b} With analytic calibration and periodic $f_0$ updates, mean phase bias is reduced to \ang{0.0} while the PLV is preserved and the PLI is strongly reduced. \newline \textbf{c} Mean phase error versus tremor frequency variability, showing improved robustness under drift.}
		\label{fig:tremor_track}
	\end{figure}

	\section*{Discussion}
	
	Reliable phase estimates are essential for closed-loop (brain) stimulation; yet, the state-of-the-art causal filtering and finite-window Hilbert methods are particularly susceptible to errors. The estimator must extrapolate the analytic signal to the boundary of the observation window, where spectral leakage and filter phase distortions are maximal. In this study, we analytically characterised the boundary effects of the endpoint-corrected Hilbert transform and developed a simple calibration that produces a statistically robust estimator with consistent performance and clear design rules.
	
	We showed how the ecHT endpoint for a narrow-band oscillation can be decomposed into two contributions: a desired positive-frequency term $G_+$ and a negative-frequency leakage term $G_-$. This decomposition separates the ecHT errors into a nearly constant phase bias driven by the filter's phase response and a wobble that varies with the unknown instantaneous phase and cannot be eliminated by calibration.
	
	This analysis cleanly separates correctable and intrinsic errors. When negative-frequency leakage is small, the ecHT behaves approximately like the true analytic endpoint, scaled by a complex-valued factor. As leakage increases, the estimator instead yields a mixture, whose phase depends on the (unknown) phase itself. 
	The phase deviation of the mixture
	is uniformly bounded by $\arcsin \abs{G_-}/\abs{G_+}$ (c.f.\ Supplementary Theorem~\ref{thm:det-bounds}). In practice, this means that suppressing leakage is not merely \textit{nice to have}; it is essential for keeping the worst-case phase errors small.\\
	
	The scalar calibration factor is the mean-square-optimal linear map from the ecHT endpoint $\hat Z$ to the ideal analytic endpoint $Z$. Geometrically, it is the orthogonal projection of $Z$ onto the subspace spanned by $\hat Z$ and it is statistically equivalent to a Wiener estimator. This has two immediate consequences:
	
	Firstly, calibration targets the systematic gain encoded by $G_+$. This removes the constant phase bias and amplitude scaling when the correlation between $\hat Z$ and $Z$ remains high. Empirically, this is reflected in a substantial reduction in phase errors: systematic phase bias is removed while preserving phase locking and reducing asymmetry. These improvements are significant: a phase bias of only a few degrees can result in a timing offset of several milliseconds, which could be substantial relative to the stimulation window.
	
	Secondly, the Wiener interpretation makes the irreducible error floor explicit. Calibration can only recover information that is present in $\hat Z$. This explains why calibration reduces mean bias more than variability: it fixes the coherent error but not the incoherent component induced by leakage and non-stationarity. In other words, the cecHT provides unbiased endpoint phase estimates under stable conditions, while any remaining jitter reflects either intrinsic limitations of the estimator or genuine instability of the signal.
	
	Stable conditions can be approximated by tracking $f_0$. When the calibration is performed at the (updated) centre frequency, the remaining systematic bias collapses to practically zero, implying that most of the residual mean error was drift-induced rather than a failure of the calibration.\\
	
	
	The cecHT remains computationally efficient, as it is FFT-based with $\mathcal{O}(L\log L)$ complexity. The additional endpoint correction and calibration produce only modest overhead. This keeps the method compatible with embedded or real-time pipelines, where latency budgets are tight.\ The calibration can be used as a drop-in improvement for existing ecHT pipelines.\\
	
	
	The ecHT assumes stationarity over the analysis window. Transient bursts, artefacts, and within-window changes violate this assumption and can amplify leakage. Robust estimation or explicit artefact rejection is a necessary companion for real-world deployment. Also, phase may be poorly defined during low-amplitude windows.
	
	When multiple nearby frequency components coexist, a single centre frequency does not fully describe the signal. In such settings, the scalar calibration remains the MSE-optimal correction for the reference definition of $Z$, but its meaning depends on how the reference signal is constructed. More importantly, true phase can become model-dependent.\ Performance claims should be interpreted as agreement with a specified reference pipeline, not an absolute ground truth.\\
	
	The theoretical characterisation here suggests several concrete extensions that could further enhance cecHT for real-time (neural) signal processing.
	
	Optimal calibration may be defined not only for the single-tone case, but also for general complex random variables. This would enable the ecHT to be calibrated against a reference constructed from real data, providing an opportunity to learn the most effective correction for the actual environment. The empirical estimator's asymptotic properties provide a basis for estimating uncertainty in calibration and for deciding when recalibration is statistically warranted.
	
	Applications may require the simultaneous tracking of multiple rhythms or the distinction of several narrow-band oscillations. A cecHT filter bank using one analytic mask per band and a band-specific calibration would allow phase estimation across all relevant frequencies, while making inter-band interference and leakage measurable.
	
	Finally, a multivariate generalisation would replace the scalar coefficient $C_\text{opt}$ by a matrix that maps a vector of ecHT endpoints (across channels or spatially filtered components) onto a reference vector.\ This could unify endpoint correction with spatial filtering, promising a significant impact on EEG/MEG and multi-contact invasive recordings in particular.\\
	
	These results establish the cecHT as an intermediate solution between simple causal filters and more complex adaptive models. By decomposing endpoint distortion into a correctable gain term and an irreducible leakage term, our framework offers both a practical tool and a diagnostic language to explain when and why phase estimation fails. The combination of theoretical transparency, computational efficiency, and a clear path to future extensions makes the ecHT a strong candidate for standardised real-time phase estimation, provided that its requirements regarding frequency stability and artefact robustness are addressed.
	
	\section*{Methods}
	
	\subsection*{Analytical characterisation of the ecHT error}
	
	For a narrow-band oscillation approximated using 
	\begin{align}
		x(n) =\cos(\omega_0 n + \varphi_0)
		\label{eq:x(n)}
	\end{align}
	at an estimated frequency $\omega_0=2\pi f_0/F_s$ with unknown initial phase $\varphi_0$, we derived an explicit expression for the ecHT endpoint output. We show in Supplementary Section~\ref{sec:DFT}, how the effective complex gain at the endpoint $F$ decomposes into two contributions
	\begin{align}
		F(\varphi_0) = G_+ + G_- \e^{-\j2\varphi_0}.
	\end{align}
	$G_+$ is a deterministic complex gain acting on the positive-frequency component which reflects the filter's phase response and is constant across cycles. The second term quantifies residual negative-frequency leakage that escapes the analytic mask and filter, and produces cycle-to-cycle variability.\\
	
	The total phase error can be written as 
	\begin{align}
		\varepsilon_\theta(\varphi_0) = \alpha + \delta_\theta(\varphi_0),
	\end{align} 
	where $\alpha = \arg G_+$ is the calibratable phase bias and $\delta_\theta(\varphi_0)$ is the leakage-induced ripple. In Supplementary Theorem~\ref{thm:det-bounds}, we show that for all initial phases, the ripple satisfies a uniform bound
	\begin{align}
		\abs{\delta_\theta(\varphi_0)} \leq \arcsin r, \quad r = \frac{\abs{G_-}}{\abs{G_+}},
	\end{align}
	where $r$ is the leakage ratio. Consequently, if the oscillation is stable and the filter successfully suppresses most of the negative-frequency components (e.g.~$r=0.1$), assuming Eq.~\eqref{eq:x(n)} holds after calibrating $\alpha$ at the newest sample, the worst-case residual phase error is $\arcsin(0.1) \approx \ang{5.7}$. In practice, typical leakage ratios are maximally on the order of a few percent, yielding only a few degrees of worst-case bias.\\
	
	Similarly, by Theorem~\ref{thm:det-bounds}, the amplitude error satisfies
	\begin{align}
		\abs{\varepsilon_A(\varphi_0)} = \big|\abs{F(\varphi_0)}-1\big|
		\leq \big|\abs{G_+}-1\big| + \abs{G_-},
	\end{align}
	demonstrating that amplitude distortion is also bounded by the deterministic gain mismatch and leakage power.\\
	
	The frequency sensitivity of the deterministic bias $\alpha = \arg G_+$ defined in Supplementary Section~\ref{sec:bias-ripple-dec} represents the endpoint group delay
	\begin{align}
		\tau_g(\omega)=-\dv{\omega} \arg G_+(\omega)
	\end{align} 
	of the ecHT operator. If the true oscillation frequency drifts by $\Delta\omega$ from the assumed centre frequency, the phase bias changes approximately as
	\begin{align}
		\alpha(\omega_0 + \Delta\omega) \approx \alpha(\omega_0) - \Delta\omega \, \tau_g(\omega_0).
	\end{align}
	This relationship reveals a fundamental trade-off: narrow-band filters reduce leakage (smaller $\abs{G_-}$) but increase group delay $\tau_g$, making the phase estimate more sensitive to frequency mismatch.

	\section*{Data availability}
	
	All data used in this study is freely available. The EEG Alpha Wave Dataset~\cite{cattanEEGAlphaWaves2018} is available from MOABB~\cite{aristimunhaMotherAllBCI2025}. The tremor data of Schreglmann et al.~\cite{schreglmannNoninvasiveSuppressionEssential2021} is available at the Harvard Dataverse~\cite{schreglmannReplicationDataNoninvasive2020}.
	
	\section*{Code availability}
	
	Code for the performed experiments and the cecHT algorithm is available from GitHub at \url{https://github.com/eikeosmers/cecHT}
	
	\printbibliography
	
	\section*{Author contribution}
	
	E.O. and D.K. wrote the manuscript. E.O. performed the mathematical calculations; E.O. designed and performed all data analyses; D.K. supervised the study.
	
	\section*{Competing interests}
	
	The authors declare no competing interests.

\clearpage

    \printtitle{Supplementary Material: Optimal Calibration of the endpoint-corrected Hilbert Transform}{Eike Osmers and Dorothea Kolossa}
	
	\tableofcontents
	
	\clearpage
	
	\appendix
	
	\section{Notation}
	
	\begin{table}[!h]
		\centering
		\caption{Summary of notation.}
		\label{tab:notation}
		\begin{tabular}{ll}
			\toprule
			$C$ & Calibration factor\\
			$D_N$ & Dirichlet kernel of length $N$\\
			$F$ & Effective endpoint complex gain\\
			$F_s$ & Sampling rate (Hz)\\
			$f_0$ & Centre frequency of the target oscillation (Hz)\\
			$f_k$ & Frequency of bin $k$, $f_k = kF_s/L$\\
			$G_+$ & Effective gain on positive-frequency component (main-lobe gain)\\
			$G_-$ & Effective gain on negative-frequency component (leakage gain)\\
			$H(k)$ & Discrete frequency response of the bandpass filter\\
			$H_{\mathrm{eff}}(k)$ & Effective ecHT response, $H_{\mathrm{eff}}(k) = m(k)H(k)$\\
			$k$ & DFT bin index, $k = 0,\dots,L-1$\\
			$L$ & DFT/FFT length ($L \geq N$)\\
			$m(k)$ & Analytic signal mask in the frequency domain\\
			$n$ & Time window index, $n = 0,\dots,N-1$\\
			$N$ & Window length (number of samples)\\
			$x(n)$ & Discrete-time real-valued input signal\\
			$X(k)$ & DFT of the windowed signal\\
			$X^+(k)$ & Analytic DFT signal\\
			$z(n)$ & Analytic signal\\
			$\hat z(n)$ & Analytic signal estimate\\
			$\hat z_\text{end}$ & Estimate of analytic endpoint, $\hat z(N-1)$\\
			$\mathcal{H}$ & Hilbert transform\\
			$\mathcal{P}$ & Set of strictly positive-frequency DFT bins\\
			$\theta(n)$ & True instantaneous phase, $\arg z(n)$\\
			$\hat\theta(n)$ & Instantaneous phase estimate, $\arg \hat z(n)$\\
			$\varphi_0$ & Initial phase of $x$\\
			$\omega_0$ & Centre angular frequency, $\omega_0 = 2\pi f_0/F_s$\\
			$\omega_k$ & Angular frequency of bin $k$, $\omega_k = 2\pi k/L$\\
			\bottomrule
		\end{tabular}
	\end{table}

	\section{ecHT Algorithm} \label{sec:ecHT-alg}
	
	The ecHT is based on the analytic signal representation of a narrow-band real-valued signal. Let $x$ denote a discrete-time signal that is approximately band-limited around a centre frequency $f_0$ (e.g.~a tremor or alpha rhythm). Its analytic signal is defined as
	\begin{align}
		z(n) &= x(n) + \j \mathcal{H} x(n)
	\end{align}
	where $\mathcal{H}$ denotes the Hilbert transform. The Hilbert transform constructs a quadrature counterpart to $x$, where $z$ encodes its instantaneous amplitude and phase. The instantaneous phase is then
	\begin{align}
		\theta &= \arg z(n) =  \arctantwo \left(\Im z(n),\,\Re z(n)\right) = \arctantwo \left(\Im z(n),\, x(n)\right).
	\end{align}
	In the ecHT, this analytic signal is computed in short windows of length $N$ using the DFT. For each new sample, a window $\{x(0), \dots , x(N-1)\}$ is analysed, and only the phase of the last sample $\theta(N-1)$ is used. This endpoint-focused, sliding-window evaluation is what makes the ecHT an online algorithm: it can be implemented recursively and updated sample by sample, without requiring access to future data, while keeping latency low.\\		
	
	The key modification relative to the standard DFT-based Hilbert transform is the introduction of a bandpass filter around the frequency of interest $f_0$ in the frequency domain. After forming the analytic spectrum $X^+$ by zeroing the negative DFT bins and doubling the positive ones, the ecHT multiplies $X^+$ by the frequency response $H$ of a causal narrow-band filter centred at $f_0$:
	\begin{align}
		\hat{Z}(k) = X^+(k) H(k)
	\end{align}
	\begin{algorithm}[!b]
		\caption{endpoint-corrected Hilbert Transform (ecHT)}
		\label{alg:echt}
		\begin{algorithmic}[1]
			\REQUIRE Discrete-time signal $x(n)$, centre frequency $f_0$, bandpass filter $H$
			\ENSURE Estimated instantaneous phase $\hat{\theta}(n)$
			
			\STATE Compute the discrete Fourier transform (DFT) of the input signal
			\begin{align}
				X(k) \leftarrow \mathrm{DFT}\{x(n)\}
			\end{align}
			\vspace{-1.5em}
			\STATE Construct the analytic spectrum $X^+(k)$, with  $\mathcal{P}$ denoting the set of positive DFT frequency indices
			\begin{align}
				X^+(k) \leftarrow 2X(k)\mathbf{1}_{k\in\mathcal{P}}, \quad X^+(0) \leftarrow X(0)
			\end{align}
			\vspace{-1.5em}
			\STATE Apply bandpass filter around the centre frequency $f_0$
			\begin{align}
				\hat{Z}(k) \leftarrow X^+(k) H(k)
			\end{align}
			\vspace{-1.5em}
			\STATE Compute the inverse DFT to obtain the analytic signal estimate
			\begin{align}
				\hat{z}(n) \leftarrow \mathrm{iDFT}\{\hat Z(k)\}
			\end{align}
			\vspace{-1.5em}
			\STATE Extract instantaneous phase
			\begin{align}
				\hat{\theta}(n) \leftarrow \arg \hat{z}(n)
			\end{align}
			\vspace{-1.5em}
			\RETURN $\hat{\theta}(n)$
		\end{algorithmic}
	\end{algorithm}
    In the time domain, this corresponds to convolving the analytic signal with a causal bandpass filter that selectively preserves components near $f_0$ while attenuating others. Importantly, this operation smooths the abrupt jump between the last and first samples introduced by the rectangular window, thereby reducing Gibbs ringing and endpoint distortion. The inverse DFT of $\hat{Z}$ yields the ecHT approximation $\hat{z}$, and the ecHT phase estimate is
	\begin{align}
		\hat{\theta}(n) = \arg \hat{z}(n).
	\end{align}
	Algorithm~\ref{alg:echt} summarises these steps. Using FFT-based implementations, the computational complexity is $\mathcal{O}(L \log L)$ per window, and the filter can be implemented recursively, meaning that only the most recent sample needs updating. For the remainder of the paper, we will denote the true mathematical analytic signal by $z$, and its ecHT-based estimate at the endpoint by $\hat{z}$. We will focus on analysing and calibrating the phase error $\hat{\theta}(N-1) - \theta(N-1)$, which is most relevant for real-time closed-loop applications.
	
	\section{Algorithm Analysis} \label{sec:DFT}
	
	The DFT implicitly assumes a rectangular window around the signal with periodic extension. Hence, a jump discontinuity manifests at the signal's edge in the absence of periodicity, inducing the Gibbs phenomenon and leakage in the time domain. As demonstrated in Fig.~\hyperref[fig:intro]{\ref*{fig:intro}b}, this DFT window forms a Dirichlet kernel in the frequency domain, characterised by the presence of pronounced side lobes. This manifests in spectral leakage. \\
	
	The signal is modelled as a finite-length cosine signal
	\begin{align}
		\begin{split}
			x(n) &= \cos (\omega_0n + \varphi_0)\qquad n=0,\dots,N-1\\
			&= \frac 12 \left(\e^{\j (\omega_0 n + \varphi_0)} + \e^{-\j (\omega_0 n + \varphi_0)}\right)
		\end{split}
	\end{align}
	sampled at $F_s$
	\begin{align}
		\omega_0 = \frac{2\pi f_0}{F_s}
	\end{align}
	
	\subsection{DFT}
	
	The $L$-point DFT of $x$ can be expressed through its values at discrete frequencies $\omega_k = 2\pi k /L$, where $k = 0, \dots, L-1$ corresponds to the DFT bin index. If $L>N$, we assume $x$ to be zero-padded to length $L$.\\
	
	Defining the length-$N$ Dirichlet kernel $D_N$~\cite{oppenheimDiscretetimeSignalProcessing2014} as
	\begin{align}
		D_N(\omega) = \sum_{n=0}^{N-1} \e^{\j\omega n} = \e^{\j \omega \frac{N-1}{2}}\frac{\sin{\frac{N\omega}{2}}}{\sin{\frac{\omega}{2}}},
	\end{align}
	the DFT of the signal can be expressed through the Dirichlet kernel
	\begin{align}
		\label{eq:DFT-transform-DN}
		\begin{split}
			X(k) &= \sum_{n=0}^{N-1} x(n) \e^{-\j \frac{2\pi kn}{L}}, \qquad \omega_k = \frac{2\pi k}{L}\\
			&= \frac 12 \sum_{n=0}^{N-1} \left( \e^{\j (\omega_0 n + \varphi_0)} + \e^{-\j (\omega_0n + \varphi_0)} \right) \e^{-\j \omega_k n}\\
			&= \frac 12 \left[ \e^{\j \varphi_0}\sum_{n=0}^{N-1} \e^{-\j (\omega_k - \omega_0)n} + \e^{-\j \varphi_0} \sum_{n=0}^{N-1} \e^{-\j (\omega_k + \omega_0)n} \right]\\
			&= \frac 12\e^{\j \varphi_0} D_N(\omega_0 - \omega_k) + \frac 12 \e^{-\j \varphi_0} D_N(-\omega_0 - \omega_k)\\
			&= X_+ (k) + X_- (k),
		\end{split}
	\end{align}
	with $X_+$ representing the positive frequency lobe near $\omega_0$ and $X_-$ the negative frequency lobe near $-\omega_0$. The positive main peak of the Dirichlet kernel characterising the analytic harmonic function will appear at $k_0$, which is the index of the bin closest to $\omega_0$.
	\begin{align}
		k_0 = \argmin_k \abs{\omega_0 - \omega_k}
	\end{align}
	
	\subsection{Analytic Signal}
	
	Ideally, the analytic signal would be obtained by completely suppressing the negative-frequency component $X_-$, but $X_-$, which is nonzero almost everywhere, cannot be removed exactly. Instead, the analytic signal is approximated by applying an analytic mask $m$, which sets the negative-frequency DFT bins to zero and doubles the positive-frequency bins. However, residual leakage of $X_-$ into the positive-frequency region remains, introducing a bias in the phase estimate.
	\begin{align}
		m(k) = \begin{cases}
			1 & k = 0\\
			2 & 1 \leq k \leq \lfloor L/2 \rfloor - 1\\
			1 & k = L/2 \ \text{(if L is even)}\\
			0 & \text{otherwise}
		\end{cases}
	\end{align}
	
	\subsection{Filtering}
	
	Let $H$ denote the frequency response of a causal bandpass filter centred on the frequency of interest. Multiplying the analytic spectrum by this response
	\begin{align}
		\hat{Z}(k) = H_\text{eff}(k) X(k), \quad H_\text{eff}(k) = m(k) H(k)
	\end{align}
	applies the desired narrow-band filter in the frequency domain. In practice, $H$ is obtained numerically using standard filter design tools evaluated at DFT frequencies; its closed-form expression is not required for the subsequent analysis.
	
	\subsection{Inverse DFT}
	
	The inverse DFT yields the analytic signal $\hat{z}$.
	\begin{align}
		\hat{z}(n) &= \frac{1}{L} \sum_{k=0}^{L-1} \hat{Z}
		(k) \e^{\j n \omega_k}
		\label{eq:iDFT}
	\end{align}
	The ecHT algorithm is called recursively to estimate the phase of newest sample by evaluating \eqref{eq:iDFT} at the last sample $\hat{z}(N-1) = \hat{z}_\text{end}$.
	\begin{align}
		\begin{split}
			\hat{z}_\text{end} &= \frac{1}{L} \sum_{k=0}^{L-1} H_\text{eff}(k) X_+(k) \e^{\j \omega_k (N-1)} + \frac{1}{L} \sum_{k=0}^{L-1} H_\text{eff}(k) X_-(k) \e^{\j \omega_k (N-1)}\\
			&= \frac{1}{2L} \e^{\j \varphi_0} \sum_{k=0}^{L-1} H_\text{eff}(k) D_N(\omega_0 - \omega_k) \e^{\j \omega_k (N-1)} \\
			&+ \frac{1}{2L} \e^{-\j \varphi_0}  \sum_{k=0}^{L-1} H_\text{eff}(k) D_N(-\omega_0 - \omega_k) \e^{\j \omega_k (N-1)}
		\end{split}
	\end{align}
	We can now define the lumped positive-frequency contribution $\hat Z_+$ and negative-frequency contribution $\hat Z_-$
	\begin{align}
		\hat{Z}_+ &= \frac{1}{2L} \sum_{k=0}^{L-1} H_\text{eff}(k) D_N(\omega_0 - \omega_k) \e^{\j \omega_k (N-1)}\\
		\hat{Z}_- &= \frac{1}{2L} \sum_{k=0}^{L-1} H_\text{eff}(k) D_N(-\omega_0 - \omega_k) \e^{\j \omega_k (N-1)}\\
		\hat{z}_\text{end} &= \e^{\j \varphi_0} \hat{Z}_+ + \e^{-\j \varphi_0} \hat{Z}_-,
	\end{align}
	which incorporate finite window truncation due to the Dirichlet kernel effect $D_N$ and the analytic signal half-spectrum obtained by application of the mask $m$. Additionally, they encapsulate the bandpass filter shape and time index shift to $n=N-1$.
	
	
	
	\subsection{Error Analysis}\label{sec:error-analysis}
	
	When designing a causal ecHT, it is unavoidable that there will be a trade-off between amplitude and phase accuracy. In this section, we characterise the endpoint error induced by finite windowing (i.e.~the Dirichlet kernel), analytic masking, and bandpass filtering. All deterministic endpoint effects will be captured by a single complex gain factor.\\
	
	For a length-$N$ cosine $x(n)=\cos(\omega_0 n+\varphi_0)$ we use the ideal analytic reference at the newest sample
	\begin{align}
		z_\text{end} = z(N-1) = \e^{\j(\omega_0(N-1)+\varphi_0)},
	\end{align}
	and compare the ecHT endpoint estimate $\hat z_\text{end}$ against it.  We define the endpoint gain
	\begin{align}
		F = \frac{\hat z_\text{end}}{z_\text{end}} ,
	\end{align}
	whose argument and magnitude directly encode phase and amplitude errors
	\begin{align}
		\varepsilon_\theta &= \arg \{\hat z_\text{end}\} - \arg \{z_\text{end}\} = \arg F,\\
		\varepsilon_A &= \abs{\hat z_\text{end}}-1 = \abs{F}-1 .
	\end{align}
	In the ideal case $F=1$, so that $\varepsilon_\theta=\varepsilon_A=0$.\\
	
	The ecHT endpoint can be written as a superposition of a positive-frequency and a negative-frequency contribution. After factoring the reference phase $\e^{\j\omega_0(N-1)}$, this yields
	\begin{align}
		\begin{split}
			F &= \frac{\hat{Z}_+\e^{\j \varphi_0}  + \hat{Z}_- \e^{-\j \varphi_0} }{\e^{\j (\omega_0 (N-1) + \varphi_0)}}\\
			&= \e^{-\j \omega_0 (N-1)} \left(\hat{Z}_+ + \hat{Z}_-\e^{-\j 2\varphi_0}\right)\\
			&= G_+ + G_- \e^{-\j2\varphi_0},
		\end{split}
	\end{align}
	where
	\begin{align}
		\begin{split}
			\label{eq:G+}
			G_+ &= \hat Z_+ \e^{-\j\omega_0(N-1)}\\
			&= \frac{1}{2L} \e^{-\j\omega_0(N-1)} \sum_{k=0}^{L-1}
			H_{\mathrm{eff}}(k)\,D_N(\omega_0-\omega_k)\,e^{\j\omega_k(N-1)},
		\end{split}\\
		\begin{split}
			\label{eq:G-}
			G_- &= \hat Z_- \e^{-\j\omega_0(N-1)}\\
			&= \frac{1}{2L} \e^{-\j\omega_0(N-1)} \sum_{k=0}^{L-1}
			H_{\mathrm{eff}}(k)\,D_N(-\omega_0-\omega_k) \e^{\j\omega_k(N-1)}.
		\end{split}
	\end{align}
	$G_+$ is the deterministic complex gain of the implemented ecHT endpoint operator at $\omega_0$, incorporating the effects of a finite window, the analytic mask, and the sampled bandpass response. We denote its unwrapped phase by $\alpha (\omega) = \arg G_+$.~$G_-$ quantifies the residual negative-frequency leakage that is not reduced by the effective analytic mask and filter. Notably, $G_+$ is independent of $\varphi_0$, whereas the leakage term inherits a $\varphi_0$-dependent rotation.

	\subsubsection{Phase}
	Instantaneous phase is taken as the argument of the (estimated) analytic signal
	\begin{align}
		\hat\theta(n)=\arg \hat z(n)=\arctantwo \left(\Im \hat z_\text{end},\, \Re \hat z_\text{end}\right).
		\label{eq:phase-arg}
	\end{align}
	At the endpoint, we have the factorisation $\hat z_\text{end} = z_\text{end}F$, where $F$ collects the deterministic effects of finite windowing, analytic masking, and bandpass filtering and $z_\text{end} = \e^{\j(\omega_0(N-1)+\varphi_0)}$ is the ideal analytic endpoint of the tone. Therefore the endpoint phase estimate satisfies
	\begin{align}
		\hat\theta(N-1)=\arg \hat z_\text{end}
		= \arg\{z_\text{end}F\}
		= \omega_0(N-1)+\varphi_0+\arg F.
	\end{align}
	All systematic distortions enter additively via $\arg F$, enabling the bias-ripple decomposition in Sec.~\ref{sec:bias-ripple-dec}.\\
	
	The literature offers an apparently equivalent alternative that replaces the real part of the sample-based estimate $\Re \hat z(n)$ with the original real sample $x(n)$ in the denominator. While this matches the phase of the ideal analytic signal $\arg z(n)$ (because $\Re z(n)=x(n)$), it is not equivalent for ecHT: the effective gain $F=F_R+\j F_I$ rotates and rescales the quadrature pair, yielding 
	\begin{align}
		\tan \hat\theta_2(N-1)=F_R\tan(\omega_0(N-1)+\varphi_0)+F_I, 
	\end{align}
	i.e.~a $\varphi_0$- and time-dependent phase warping even for a single tone. See Appendix~\ref{sec:phase1 vs phase2} for the full discussion. For this reason, we use \eqref{eq:phase-arg} exclusively throughout.

	\subsubsection{Bias-ripple decomposition} \label{sec:bias-ripple-dec}
	We define the leakage ratio $r$ and the leakage phase offset $\Delta$
	\begin{align}
		r = \frac{\abs{G_-}}{\abs{G_+}},\qquad G_+ = \abs{G_+} \e^{\j\alpha},\qquad G_- = \abs{G_-} \e^{\j\beta},\qquad
		\Delta = \beta-\alpha.
		\label{eq:leakage-ratio}
	\end{align}
	This yields
	\begin{align}
		F(\varphi_0)=\abs{G_+} \e^{\j\alpha}\Bigl(1+r \e^{\j(\Delta-2\varphi_0)}\Bigr).
	\end{align}
	This motivates splitting the total phase error into a deterministic (calibratable) bias $\alpha$ and a $\varphi_0$-dependent leakage ripple $\delta$.
	\begin{align}
		\varepsilon_\theta(\varphi_0)=\arg F=\alpha+\delta_\theta(\varphi_0),
		\qquad
		\delta_\theta(\varphi_0)=\arg F(\varphi_0)-\arg G_+
	\end{align}
	Similarly, the amplitude error decomposes into a deterministic amplitude bias
	controlled by $\abs{G_+}$ and a $\varphi_0$-dependent modulation induced by $G_-$.\\
	
	If the centre frequency shifts from $\omega_0$, $\omega = \omega_0 + \Delta \omega$, then the deterministic bias term changes as
	\begin{align}
		\begin{split}
			\alpha(\omega_0 + \Delta \omega) &\approx \alpha(\omega_0) + \Delta \omega \left. \dv{\alpha}{\omega}\right|_{\omega = \omega_0}\\
			&= \alpha(\omega_0) - \Delta \omega \tau_g (\omega_0).
		\end{split}
		\label{eq:freq-shift}
	\end{align}
	The endpoint group delay $\tau_g$ is defined as
	\begin{align}
		\tau_g (\omega) = - \dv{\omega} \alpha (\omega), \qquad \alpha (\omega) = \arg G_+
		\label{eq:tau_g}
	\end{align}
	as the phase slope of the endpoint gain $G_+$ in \eqref{eq:G+}, not the group delay of the filter $H$ alone.
	
	\begin{theorem}[Deterministic phase-ripple and amplitude bounds, Proof Appendix~\ref{sec:det-bounds}]\label{thm:det-bounds}
		Assume $\abs{G_-}\leq \abs{G_+}$ and define $r\in[0,1]$ and $\alpha=\arg G_+$ as above.
		Let
		\begin{align}
			F(\varphi_0)=G_+ + G_- \e^{-\j2\varphi_0},
			\qquad
			\delta_\theta(\varphi_0)=\arg\{F(\varphi_0)\}-\alpha.
		\end{align}
		Then for all $\varphi_0\in\mathbb{R}$,
		\begin{align}
			\abs{\delta_\theta(\varphi_0)} \leq \arcsin r.
		\end{align}
		Moreover, the amplitude error satisfies the uniform bound
		\begin{align}
			\abs{\varepsilon_A(\varphi_0)} = \big|\abs{F(\varphi_0)}-1\big|
			\leq \big|\abs{G_+}-1\big| + \abs{G_-}.
		\end{align}
	\end{theorem}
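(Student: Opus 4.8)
The plan is to reduce both bounds to elementary inequalities after factoring the deterministic gain $G_+$ out of $F(\varphi_0)$. Using the notation of \eqref{eq:leakage-ratio}, I would write $F(\varphi_0) = \abs{G_+}\,\e^{\j\alpha}\bigl(1 + r\,\e^{\j\psi}\bigr)$ with $\psi := \Delta - 2\varphi_0$. Since $\varphi_0 \mapsto \Delta - 2\varphi_0$ is an affine bijection of $\mathbb{R}$, as $\varphi_0$ ranges over $\mathbb{R}$ the angle $\psi$ ranges over a full period, so $\delta_\theta(\varphi_0) = \arg F(\varphi_0) - \alpha = \arg\bigl(1 + r\,\e^{\j\psi}\bigr)$, and both claims become statements about the curve $\psi \mapsto 1 + r\,\e^{\j\psi}$, a circle of radius $r\le 1$ centred at $1$.

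For the phase bound I would split $1 + r\,\e^{\j\psi} = (1 + r\cos\psi) + \j\,r\sin\psi$ and note that the real part is $\ge 1 - r \ge 0$, so $\delta_\theta \in [-\pi/2,\pi/2]$ and $\arcsin$ is the correct inverse of $\sin$ on that range. Using $\abs{1 + r\,\e^{\j\psi}}^2 = 1 + 2r\cos\psi + r^2$,
\begin{align}
	\sin^2 \delta_\theta = \frac{r^2 \sin^2\psi}{1 + 2r\cos\psi + r^2}.
\end{align}
The claim $\abs{\delta_\theta}\le \arcsin r$ is then equivalent to $\sin^2\delta_\theta \le r^2$, i.e.\ to $\sin^2\psi \le 1 + 2r\cos\psi + r^2$, which rearranges to the perfect-square inequality $0 \le (\cos\psi + r)^2$. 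This holds for every $\psi$, with equality exactly when $\cos\psi = -r$ (which is attainable since $r\le 1$), so the bound is tight.

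The amplitude bound follows from two applications of the triangle inequality on the reals: $\big|\abs{F(\varphi_0)} - 1\big| \le \big|\abs{F(\varphi_0)} - \abs{G_+}\big| + \big|\abs{G_+} - 1\big|$, and by the reverse triangle inequality $\big|\abs{G_+ + G_-\e^{-\j2\varphi_0}} - \abs{G_+}\big| \le \abs{G_-\e^{-\j2\varphi_0}} = \abs{G_-}$. Combining the two gives $\abs{\varepsilon_A(\varphi_0)} \le \abs{G_-} + \big|\abs{G_+}-1\big|$ uniformly in $\varphi_0$, which is the stated bound.

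I expect the only delicate point to be the degenerate case $r = 1$ (i.e.\ $\abs{G_-} = \abs{G_+}$): there $1 + r\,\e^{\j\psi}$ vanishes at $\psi = \pi$, where $\arg F$ — and hence $\delta_\theta$ — is undefined; at every $\varphi_0$ for which it is defined the bound $\abs{\delta_\theta} \le \arcsin 1 = \pi/2$ still holds, and it is approached as $\psi\to\pi$. For $r<1$ the real part $1+r\cos\psi \ge 1-r>0$ is bounded away from zero and no such issue arises; one also just has to fix a consistent branch of $\arg$. Apart from this, the proof is entirely self-contained, resting on the single identity $1 + 2r\cos\psi + r^2 - \sin^2\psi = (\cos\psi + r)^2$ and the (reverse) triangle inequality.
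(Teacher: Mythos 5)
Your proposal is correct and follows essentially the same route as the paper's proof: factor out $G_+$, reduce the phase bound to $\abs{\sin\delta_\theta}\le r$ via the identity $1+2r\cos\psi+r^2-\sin^2\psi=(\cos\psi+r)^2\ge 0$ (with the same check that $\Re(1+r\e^{\j\psi})\ge 1-r$ justifies using $\arcsin$, and the same treatment of the degenerate case $r=1$), and obtain the amplitude bound from the reverse triangle inequality followed by the triangle inequality. Your added observation that equality occurs exactly at $\cos\psi=-r$, so the phase bound is tight, is a small bonus not spelled out in the paper.
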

	
	A scalar calibration can remove the deterministic bias encoded in $G_+$, whereas $G_-$ produces phase and amplitude ripples that vary with $\varphi_0$ and therefore set a deterministic worst-case floor. Consequently, if the filter successfully suppresses most of the negative-frequency component (e.g.~$r=0.1$), the maximum possible phase error after calibrating $\alpha$ at the newest sample is $\arcsin(0.1) \approx \ang{5.7}$. In practice, typical leakage ratios are on the order of a few percent, yielding only a few degrees of worst-case bias.\\
	
	For $r\ll 1$,
	\begin{align}
		\delta_\theta(\varphi_0)
		= \arg\left\lbrace1+r\e^{\j(\Delta-2\varphi_0)}\right\rbrace
		\approx r\sin(\Delta-2\varphi_0),
		\label{eq:wobble-phase}
	\end{align}
	so the phase error is well-approximated by a constant bias $\alpha$ plus a
	small sinusoidal wobble with period $\pi$ in $\varphi_0$. The endpoint magnitude is
	\begin{align}
		\begin{split}
			|F(\varphi_0)|
			&= \abs{G_+}\sqrt{1+r^2+2r\cos(\Delta-2\varphi_0)}\\
			&\approx \abs{G_+}\Bigl(1+r\cos(\Delta-2\varphi_0)\Bigr),
		\end{split}
		\label{eq:wobble-amplitude}
	\end{align}
	so $\varepsilon_A$ contains a constant part $\abs{G_+}-1$ and a small oscillatory part due to leakage.

	\subsubsection{Error Sources}
	
	The endpoint error can be understood in terms of the complex gain 
	\begin{align}
		F=G_+ + G_- \e^{-\j 2 \varphi_0}. 
	\end{align}
	$G_+$ is independent of the initial phase $\varphi_0$ and acts as a constant complex gain at the frequency of interest. Its argument produces a constant phase bias $\alpha$, whose local frequency sensitivity is governed by the endpoint group delay $\tau_g (\omega_0)$ of \eqref{eq:tau_g}. This component is fully deterministic for a given ecHT design and can therefore be removed by a global complex calibration factor.\\
	
	The second term, $G_- \e^{-\j 2 \varphi_0}$ arises from negative-frequency leakage through the Dirichlet kernel and generates a phase and amplitude ripple around the aforementioned bias. As it depends on $\varphi_0$, it varies from cycle to cycle (even for identical noise conditions) and contributes to the variance of the endpoint error rather than its mean. In the small-leakage regime, this produces a sinusoidal wobble of the phase error in \eqref{eq:wobble-phase} with amplitude proportional to the leakage ratio $r=\abs{G_-}/\abs{G_+}$.\\
	
	This bias-variance dichotomy is made explicit in the MSE decomposition and calibration results of Sections~\ref{sec:calibration} and \ref{sec:derivation_bias_variance}: scalar calibration can eliminate the bias associated with $G_+$, but the variance floor set by $G_-$ and by noise remains, and must be controlled through design choices such as window length, bandwidth and filter order.\\
	
	The analysis depends on the assumed centre frequency $f_0$ because both $H$ and, consequently $G_\pm$ are dependent on the centring of the bandpass. If the true instantaneous frequency is $f(t)=f_0+\Delta f(t)$, then $G_+(f)$ is sampled off-centre and its argument becomes time-varying, resulting in a time-varying phase bias even after applying a fixed calibration. As seen in \eqref{eq:freq-shift}, this mismatch induces a bias proportional to the group delay $\Delta \omega\, \tau_g (\omega_0)$, motivating online tracking of $f_0$ and periodic re-computation of the frequency-specific calibration when drift is present.
	
		\section{Calibration}\label{sec:calibration}
	
	The previous section characterised the deterministic complex gain induced by windowing and filtering at the endpoint $F$. This section shows that a single complex scalar can be used to calibrate the ecHT endpoints to minimise mean-square error with respect to an ideal analytic reference, first in general and then for a single tone.
	
	\subsection{Calibration for General Complex Signals}\label{sec:calibration-general-signal}
	
	Let $Z$ denote the ideal analytic signal endpoint on a window, and let $\hat Z$ denote the corresponding ecHT endpoint estimate on the same window. We model $(Z,\hat Z)$ as a pair of complex-valued random variables capturing variability across repeated windows/trials (e.g.~due to noise, phase, or mild non-stationarity). Expectations below are understood as limits over such repetitions.\\
	
	We introduce a complex scalar calibration factor $C \in \mathbb{C}$ to compensate for both amplitude bias (through $\abs{C}$) and systematic phase offset (through $\arg C$).
	\begin{align}
		\hat z_C(n) = C \hat z(n), \qquad \hat Z_C = C \hat Z
	\end{align}
	Calibration quality is measured by the mean-square error (MSE).
	\begin{align}
		J(C) = \mathbb{E}\left[\,\abs*{C\hat Z - Z}^2\,\right]
		\label{eq:J-def}
	\end{align}
	
	\begin{theorem}[Mean-square optimal scalar calibration, Proof Appendix~\ref{sec:scalar-calibration-general-proof}]
		\label{thm:scalar-calibration}
		Let $Z$, $\hat Z$ be complex random variables with finite second moments and $\mathbb{E}[\abs*{\smash{\hat Z}}^2]>0$. Then $J(C)$ in~\eqref{eq:J-def} is strictly convex in $C\in\mathbb{C}$ and attains its unique minimum at
		\begin{align}
			C_\text{opt} = \frac{\mathbb{E}[\hat Z^{\ast} Z]}{\mathbb{E}[\abs*{\smash{\hat Z}}^2]}.
			\label{eq:Copt-general}
		\end{align}
		The corresponding minimal mean-square error is
		\begin{align}
			J(C_\text{opt}) = \mathbb{E}[\abs{Z}^2]
			- \frac{\abs*{\mathbb{E}[\hat Z^{\ast} Z]}^2}{\mathbb{E}[\abs*{\hat Z}^2]}
			\;\le\; J(1) = \mathbb{E}[\abs*{\hat Z - Z}^2],
			\label{eq:Jmin-general}
		\end{align}
		so an optimally chosen scalar calibration never increases the MSE relative to the uncalibrated estimate.
	\end{theorem}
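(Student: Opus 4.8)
The plan is to regard $J(C)$ as a real quadratic form in the real and imaginary parts of $C=C_R+\j C_I$ and to complete the square. First I would expand the integrand pointwise,
\begin{align}
\abs{C\hat Z - Z}^2 = \abs{C}^2\abs{\hat Z}^2 - 2\Re\{\bar C\,\hat Z^{\ast} Z\} + \abs{Z}^2,
\end{align}
and note that the finite-second-moment hypotheses together with Cauchy--Schwarz ($\abs{\mathbb{E}[\hat Z^{\ast}Z]}\le \mathbb{E}[\abs{\hat Z}^2]^{1/2}\mathbb{E}[\abs{Z}^2]^{1/2}<\infty$) make every term integrable, so that taking expectations termwise is legitimate and yields
\begin{align}
J(C) = a\abs{C}^2 - 2\Re\{\bar C b\} + c,\qquad
a=\mathbb{E}[\abs{\hat Z}^2]>0,\ \ b=\mathbb{E}[\hat Z^{\ast}Z],\ \ c=\mathbb{E}[\abs{Z}^2].
\end{align}

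Second, completing the square over $\mathbb{C}$ gives the single identity
\begin{align}
J(C) = a\,\abs*{C - \tfrac{b}{a}}^2 + \Bigl(c - \tfrac{\abs{b}^2}{a}\Bigr),
\end{align}
from which all three claims follow at once. Strict convexity: as a function of $(C_R,C_I)\in\mathbb{R}^2$ the map equals $a(C_R^2+C_I^2)$ plus affine terms, with constant Hessian $2aI\succ 0$. Uniqueness and location of the minimum: the term $a\abs*{C-b/a}^2$ is nonnegative and vanishes only at $C_{\mathrm{opt}}=b/a=\mathbb{E}[\hat Z^{\ast}Z]/\mathbb{E}[\abs{\hat Z}^2]$. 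Minimal value: $J(C_{\mathrm{opt}})=c-\abs{b}^2/a$, which is exactly~\eqref{eq:Jmin-general}. Equivalently one may obtain $C_{\mathrm{opt}}$ from the stationarity/orthogonality condition $\partial J/\partial\bar C=0$, i.e.\ $\mathbb{E}[\hat Z^{\ast}(C_{\mathrm{opt}}\hat Z - Z)]=0$, and then invoke convexity to conclude it is the global minimiser; the completed-square form makes this extra step unnecessary.

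Third, the inequality $J(C_{\mathrm{opt}})\le J(1)$ is immediate because $C=1$ is an admissible argument and $C_{\mathrm{opt}}$ minimises $J$; explicitly, substituting $C=1$ into the pointwise expansion gives $J(1)=\mathbb{E}[\abs{\hat Z - Z}^2]$, and the completed-square identity yields $J(1)-J(C_{\mathrm{opt}})=a\abs*{1-b/a}^2\ge 0$, with equality iff $C_{\mathrm{opt}}=1$.

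I do not anticipate a substantive obstacle: this is the standard one-dimensional complex least-squares (linear-MMSE) projection. The only points that merit a line of care are (a) justifying integrability and the termwise expansion of the expectation via the moment hypotheses and Cauchy--Schwarz, and (b) being explicit that ``differentiation'' here is with respect to $(C_R,C_I)$ --- or, equivalently, Wirtinger differentiation --- since $C\mapsto J(C)$ is not holomorphic. Completing the square sidesteps (b) cleanly, so that is the route I would present.
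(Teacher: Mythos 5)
Your proposal is correct and follows essentially the same route as the paper's Appendix proof: expand $J(C)$ into the quadratic form $a\abs{C}^2-2\Re\{\bar C b\}+c$ and complete the square to read off the unique minimiser, the minimal value, and the comparison with $J(1)$ simultaneously. The only additions beyond the paper's argument are your explicit integrability remark via Cauchy--Schwarz and the note on real/Wirtinger differentiation, both of which are fine but not substantively different.
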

	
	$C_\text{opt}$ in~\eqref{eq:Copt-general} aligns the phase of $\hat{Z}$ to $Z$ and scales the amplitude, essentially correcting the systematic bias. Thus, $C_\text{opt}$ is the unique scalar calibration that minimises $J(C)$ and has the structure of a complex Wiener filter. It is the scalar linear estimator mapping $\hat Z$ to the best linear mean-square estimate of $Z$.\\
	
	Define the complex correlation coefficient
	\begin{align}
		\rho_{Z\hat Z}
		= \frac{\mathbb{E}[\hat Z^{\ast} Z]}
		{\sqrt{\mathbb{E}[\abs*{\smash{\hat Z}}^2]\mathbb{E}[\abs{Z}^2]}}.
	\end{align}
	
	\begin{corollary}[Correlation form of the minimal MSE, Proof Appendix~\ref{sec:correlation-general-proof}]
		\label{cor:correlation-form}
		With $\rho_{Z\hat Z}$ as above,
		\begin{align}
			J(C_\text{opt}) = \mathbb{E}[\abs{Z}^2]\left(1 - \abs{\rho_{Z\hat Z}}^2\right),
			\label{eq:Jmin-corr}
		\end{align}
		with $0 \leq J(C_\text{opt}) \leq \mathbb{E}[\abs{Z}^2]$. Moreover,
		\begin{itemize}
			\item $J(C_\text{opt}) = 0$ if and only if $\hat Z = C Z$ almost surely (perfect linear predictability);
			\item $J(C_\text{opt}) = \mathbb{E}[\abs{Z}^2]$ if and only if $\hat Z$ and $Z$ are uncorrelated, i.e.\ $\mathbb{E}[\hat Z^{\ast} Z] = 0$.
		\end{itemize}
	\end{corollary}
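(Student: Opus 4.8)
The plan is to obtain~\eqref{eq:Jmin-corr} as a direct algebraic rewriting of the minimal-MSE formula~\eqref{eq:Jmin-general} from Theorem~\ref{thm:scalar-calibration}, and to derive the bounds and equality cases from the complex Cauchy--Schwarz inequality. First I would start from $J(C_{\mathrm{opt}}) = \mathbb{E}[\abs{Z}^2] - \abs{\mathbb{E}[\hat Z^{\ast} Z]}^2/\mathbb{E}[\abs{\hat Z}^2]$, which is licensed by $\mathbb{E}[\abs{\hat Z}^2]>0$. Using the standing assumption $\mathbb{E}[\abs{Z}^2]>0$, the definition of $\rho_{Z\hat Z}$ gives $\abs{\rho_{Z\hat Z}}^2 = \abs{\mathbb{E}[\hat Z^{\ast} Z]}^2/(\mathbb{E}[\abs{\hat Z}^2]\,\mathbb{E}[\abs{Z}^2])$, so the subtracted term equals $\mathbb{E}[\abs{Z}^2]\,\abs{\rho_{Z\hat Z}}^2$; factoring out $\mathbb{E}[\abs{Z}^2]$ yields the stated identity.

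For the range $0\le J(C_{\mathrm{opt}})\le\mathbb{E}[\abs{Z}^2]$, I would apply Cauchy--Schwarz in the inner product $\langle U,V\rangle=\mathbb{E}[U^{\ast}V]$ on square-integrable complex random variables, giving $\abs{\mathbb{E}[\hat Z^{\ast} Z]}^2\le\mathbb{E}[\abs{\hat Z}^2]\,\mathbb{E}[\abs{Z}^2]$, i.e.\ $\abs{\rho_{Z\hat Z}}\le 1$; combined with $\abs{\rho_{Z\hat Z}}^2\ge 0$ this confines $1-\abs{\rho_{Z\hat Z}}^2$ to $[0,1]$, and multiplying by $\mathbb{E}[\abs{Z}^2]$ gives the bounds. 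The case $J(C_{\mathrm{opt}})=\mathbb{E}[\abs{Z}^2]$ is then equivalent to $\abs{\rho_{Z\hat Z}}=0$, i.e.\ $\mathbb{E}[\hat Z^{\ast}Z]=0$, which is immediate from the identity.

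For the characterisation $J(C_{\mathrm{opt}})=0\iff\hat Z=CZ$ a.s., I would combine the equality case of Cauchy--Schwarz (linear dependence of $Z$ and $\hat Z$) with a short argument on the optimizer. The point to spell out carefully is that $J(C_{\mathrm{opt}})=\mathbb{E}[\abs{C_{\mathrm{opt}}\hat Z-Z}^2]=0$ forces $C_{\mathrm{opt}}\hat Z=Z$ a.s., and that $C_{\mathrm{opt}}=\mathbb{E}[\hat Z^{\ast}Z]/\mathbb{E}[\abs{\hat Z}^2]\ne 0$ in this case (otherwise $\mathbb{E}[\hat Z^{\ast}Z]=0$ would give $J(C_{\mathrm{opt}})=\mathbb{E}[\abs{Z}^2]>0$, a contradiction); setting $C=1/C_{\mathrm{opt}}$ then yields $\hat Z=CZ$ a.s. Conversely, if $\hat Z=CZ$ a.s.\ then $J(1/C)=0$, and optimality of $C_{\mathrm{opt}}$ gives $J(C_{\mathrm{opt}})\le J(1/C)=0$. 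The only subtlety anywhere is this bookkeeping around the degenerate/zero cases in the equality characterisations; the core identity is a one-line substitution, so I do not anticipate a genuine obstacle.
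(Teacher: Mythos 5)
Your proposal is correct and follows essentially the same route as the paper's proof: substitute the definition of $\rho_{Z\hat Z}$ into \eqref{eq:Jmin-general}, invoke Cauchy--Schwarz for the bounds, and read off the equality cases. Your extra bookkeeping around $C_{\mathrm{opt}}\neq 0$ in the $J(C_{\mathrm{opt}})=0$ case is a slightly more explicit version of the paper's appeal to the Cauchy--Schwarz equality condition, but it is not a different argument.
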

	
	Geometrically, $C_\text{opt}\hat Z$ is the orthogonal projection of $Z$ onto the one-dimensional subspace spanned by $\hat Z$ in the Hilbert space of complex random variables. The factor $\abs*{\rho_{Z\hat Z}}^2$ is the fraction of analytic-signal energy that can be recovered linearly from the ecHT endpoint, while $1-\abs*{\rho_{Z\hat Z}}^2$ quantifies the loss that is not recoverable due to filtering, windowing, and spectral leakage. Calibration can only recover information present in $\hat Z$; it cannot undo distortions that are uncorrelated with $Z$.\\
	
	It is also helpful to separate bias and variability which is derived in Appendix~\ref{sec:derivation_bias_variance}.
	\begin{align}
		J(C) = \abs{\mathbb{E}[C\hat Z - Z]}^2
		+ \mathbb{E}\left[\abs*{(C\hat Z - Z) - \mathbb{E}[C\hat Z - Z]}^2\right]
	\end{align}
	The bias captures systematic distortions, while the variance measures inconsistency between cycles caused by noise and leakage-induced ripple. The optimal calibration $C_\text{opt}$ minimises both contributions jointly; any further improvement requires a reduction in leakage or noise in the ecHT itself, rather than relying on additional rescaling.\\
	
	In practice, $C_\text{opt}$ is unknown and must be estimated from data. Suppose we have $M$ windows or trials yielding pairs $(Z_i,\hat Z_i)$, $i=0,\dots,M-1$, at a fixed endpoint. A natural empirical estimator is
	\begin{align}
		\hat C_M
		= \frac{\sum_{i=0}^{M-1} \hat Z_i^{\ast} Z_i}{\sum_{i=0}^{M-1} \abs*{\hat Z_i}^2}.
		\label{eq:C-hat}
	\end{align}
	
	\begin{corollary}[Large-sample properties of the empirical calibration, Proof Appendix~\ref{sec:large-sample-conv}] \label{thm:asymptotics}
		Assume that the sequence $\{(Z_i,\hat Z_i)\}_{i\ge1}$ is strictly stationary and weakly dependent, with finite moments of order $2+\delta$ for some $\delta>0$. Then the estimator $\hat C_M$ in~\eqref{eq:C-hat} satisfies:
		\begin{enumerate}
			\item \emph{Consistency:} $\hat C_M \to C_\text{opt}$ almost surely as $M\to\infty$;
			\item \emph{Asymptotic normality:} $\sqrt{M}\,(\hat C_M - C_\text{opt}) \to \mathcal{N}_{\mathbb{C}}(\vec{0},\mat{\Sigma})$ for some covariance matrix $\mat{\Sigma}$.
		\end{enumerate}
		Thus, averaging over more windows or repetitions yields a calibration coefficient that converges to the theoretical optimum at the usual $1/\sqrt{M}$ rate.
	\end{corollary}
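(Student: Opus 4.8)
The plan is to recognise $\hat C_M$ as a smooth function of sample averages of a stationary sequence, establish almost-sure convergence of those averages via an ergodic law of large numbers (giving consistency), establish a joint central limit theorem for the numerator and denominator averages, and transfer the latter through the delta method.

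\textbf{Step 1 (reduction to averages).} Write $\hat C_M = A_M/B_M$ with
\[
A_M = \frac1M\sum_{i=0}^{M-1}\hat Z_i^{\ast} Z_i ,
\qquad
B_M = \frac1M\sum_{i=0}^{M-1}\abs*{\hat Z_i}^2 .
\]
Cauchy--Schwarz together with the assumed finiteness of moments of order $2+\delta>2$ makes both $\hat Z^{\ast}Z$ and $\abs{\hat Z}^2$ integrable. Put $a := \mathbb{E}[\hat Z^{\ast} Z]$ and $b := \mathbb{E}[\abs{\hat Z}^2]$; by hypothesis $b>0$, and Theorem~\ref{thm:scalar-calibration} supplies the algebraic identity $C_{\mathrm{opt}} = a/b$.

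\textbf{Step 2 (consistency).} The processes $\{\hat Z_i^{\ast} Z_i\}$ and $\{\abs{\hat Z_i}^2\}$ are time-invariant measurable functionals of the stationary, weakly dependent (hence ergodic) process $\{(Z_i,\hat Z_i)\}$, so they are themselves stationary and ergodic. Birkhoff's ergodic theorem (equivalently, the strong law for stationary ergodic sequences) gives $A_M \to a$ and $B_M \to b$ almost surely. Since $b>0$, on the full-probability convergence event $B_M$ is eventually bounded away from $0$, so $\hat C_M$ is eventually well-defined, and continuity of $(x,y)\mapsto x/y$ at $(a,b)$ yields $\hat C_M \to a/b = C_{\mathrm{opt}}$ a.s.

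\textbf{Step 3 (joint CLT).} Identify $\mathbb{C}^2\cong\mathbb{R}^3$ via the stationary $\mathbb{R}^3$-valued sequence
\[
\vec V_i = \bigl(\Re(\hat Z_i^{\ast} Z_i),\ \Im(\hat Z_i^{\ast} Z_i),\ \abs*{\hat Z_i}^2\bigr),
\qquad
\mathbb{E}[\vec V_i] = (\Re a,\ \Im a,\ b),\qquad \mathbb{E}\,\norm{\vec V_i}^{2+\delta}<\infty .
\]
Under the weak-dependence hypothesis, made quantitative as a mixing-rate condition compatible with $(2+\delta)$ moments (so that $\sum_{h}\norm{\operatorname{Cov}(\vec V_0,\vec V_h)}<\infty$), a CLT for stationary weakly dependent sequences applies and gives
\[
\sqrt M\,\bigl((A_M,B_M)-(a,b)\bigr)\ \xrightarrow{d}\ \mathcal{N}_{\mathbb{R}^3}(\vec 0,\mat\Gamma),
\qquad
\mat\Gamma = \sum_{h\in\mathbb{Z}}\operatorname{Cov}(\vec V_0,\vec V_h).
\]
The ratio map $g(u_1,u_2,v)=(u_1/v,\,u_2/v)$ is $C^1$ on $\{v\neq0\}$, which contains $(\Re a,\Im a,b)$; its Jacobian there is the $2\times3$ matrix $Dg$ with rows $(1/b,0,-\Re a/b^2)$ and $(0,1/b,-\Im a/b^2)$. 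The delta method combined with the display above then gives
\[
\sqrt M\,(\hat C_M - C_{\mathrm{opt}})\ \xrightarrow{d}\ \mathcal{N}_{\mathbb{C}}(\vec 0,\mat\Sigma),
\qquad
\mat\Sigma = (Dg)\,\mat\Gamma\,(Dg)^{\top},
\]
re-identified as a (generally improper) complex Gaussian law, with Slutsky's lemma absorbing the difference between $\hat C_M$ and $A_M/B_M$ on the null event $\{B_M=0\}$; the $1/\sqrt M$ rate is exactly the CLT scaling.

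\textbf{Main obstacle.} The corollary states the dependence hypothesis loosely (``strictly stationary and weakly dependent''), so the substantive part of a rigorous proof is to fix a precise quantitative condition — e.g.\ strong mixing with $\sum_n \alpha(n)^{\delta/(2+\delta)}<\infty$, or an $L^{2+\delta}$ physical-dependence / near-epoch-dependence condition — that simultaneously implies ergodicity (for Step 2), forces absolute summability of the autocovariances so that $\mat\Gamma$ is finite, and is matched to the $2+\delta$ moment bound so that a standard stationary-sequence CLT (Ibragimov--Herrndorf, or a martingale-approximation argument) applies to $\{\vec V_i\}$. Everything else — integrability from Cauchy--Schwarz, the identity $C_{\mathrm{opt}}=a/b$, positivity of $b$, and the smoothness of the ratio map away from $v=0$ — is routine.
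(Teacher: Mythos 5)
Your proposal is correct and follows essentially the same route as the paper's proof: reduction to the ratio of sample averages, the ergodic strong law for consistency, and a multivariate CLT on the real three-vector $(\Re(\hat Z_i^{\ast}Z_i),\Im(\hat Z_i^{\ast}Z_i),\abs{\hat Z_i}^2)$ combined with the delta method for the ratio map, with the identical Jacobian. Your closing remark about needing a quantitative mixing condition is fair — the paper likewise leaves ``weakly dependent'' unspecified and simply invokes a mixing CLT with the $2+\delta$ moment bound.
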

	
	Corollary~\ref{thm:asymptotics} provides a general, data-driven scheme for learning a global, complex gain by regressing the ecHT output onto a reference analytic signal. The theorem guarantees that this empirical gain asymptotically converges to the MSE-optimal scalar as the calibration window increases.\ However, in the short-window, near-single-tone regime considered here, where $f_0$ is known or can be accurately estimated, and where an accurate analytic expression for the end\-point error is available, the closed-form calibration in Section~\ref{sec:calibration-single-tone} is both more efficient and more reliable. Data-driven calibration can converge slowly and depend critically on the quality of the reference, potentially degrading endpoint performance if fitted to short windows. Corollary~\ref{thm:asymptotics} can be viewed primarily as providing a theoretical and practical extension for more complex pipelines (e.g.~broader-band or hardware-perturbed signals), where analytic endpoint characterisations are unavailable, but ample calibration data is accessible.
	
	\subsection{Calibration to a Single Tone} \label{sec:calibration-single-tone}
	
	In the single-tone model the only source of randomness across repetitions is the uniform initial phase $\varphi_0 \sim \mathcal{U}(-\pi,\pi)$, so the expectations of Theorem~\ref{thm:scalar-calibration} reduce to averages over $\varphi_0$. In that case, the ecHT endpoint can be written as
	\begin{align}
		F = \frac{\hat z_\text{end}}{z_\text{end}}
		= G_+ + G_- \e^{-\j 2\varphi_0},
		\label{eq:F-standard}
	\end{align}
	where $G_+$ is the complex gain for the positive-frequency component, $G_-$ is the (typically small) gain for the negative-frequency component (leakage), and the ideal analytic endpoint is normalised to $\abs{Z}=1$.\\
	
	Using $\mathbb{E}[\e^{\pm \j2\varphi_0}] = 0$, it follows that
	\begin{align}
		\mathbb{E}[\hat Z^{\ast} Z] = G_+^{\ast}, \qquad
		\mathbb{E}[\abs*{\hat Z}^2] = \abs{G_+}^2 + \abs{G_-}^2.
	\end{align}
	
	\begin{theorem}[Optimal scalar calibration for a single tone, Proof Appendix~\ref{sec:scalar-calibration-proof}] 		\label{thm:single-tone-calibration}
		Under the assumptions above, the MSE-optimal scalar calibration factor~\eqref{eq:Copt-general} reduces to
		\begin{align}
			C_\text{opt} = \frac{G_+^{\ast}}{\abs{G_+}^2 + \abs{G_-}^2},
			\label{eq:Copt-single-tone}
		\end{align}
		and the corresponding minimal MSE is
		\begin{align}
			J(C_\text{opt}) = \frac{\abs{G_-}^2}{\abs{G_+}^2 + \abs{G_-}^2}.
			\label{eq:Jmin-single-tone}
		\end{align}
		In other words, the irreducible error after optimal calibration equals the fraction of negative-frequency leakage power in the total ecHT endpoint power.
	\end{theorem}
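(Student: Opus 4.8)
The plan is to obtain Theorem~\ref{thm:single-tone-calibration} as a direct specialisation of the general scalar-calibration result in Theorem~\ref{thm:scalar-calibration}: once the two second-order moments $\mathbb{E}[\hat Z^{\ast}Z]$ and $\mathbb{E}[\abs*{\hat Z}^2]$ have been evaluated for the single-tone model, both \eqref{eq:Copt-single-tone} and \eqref{eq:Jmin-single-tone} follow by substitution into \eqref{eq:Copt-general} and \eqref{eq:Jmin-general}.

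First I would record the single-tone model explicitly. By the endpoint factorisation of Section~\ref{sec:error-analysis}, the ecHT endpoint estimate is $\hat Z = \hat z_{\mathrm{end}} = z_{\mathrm{end}} F$ with ideal reference $Z = z_{\mathrm{end}} = \e^{\j(\omega_0(N-1)+\varphi_0)}$, $\abs{Z}=1$, and $F = G_+ + G_- \e^{-\j2\varphi_0}$, where $G_\pm$ are the deterministic gains of \eqref{eq:G+}--\eqref{eq:G-} and hence independent of $\varphi_0$. Since $\varphi_0\sim\mathcal{U}(-\pi,\pi)$, the trigonometric moments $\mathbb{E}[\e^{\j2\varphi_0}]=\mathbb{E}[\e^{-\j2\varphi_0}]=0$ hold, as already noted before the theorem statement.

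Next I would compute the two moments. Because $\abs{Z}=1$, $\hat Z^{\ast}Z = \abs{Z}^2 F^{\ast} = F^{\ast} = G_+^{\ast} + G_-^{\ast}\e^{\j2\varphi_0}$, so taking expectations and using $\mathbb{E}[\e^{\j2\varphi_0}]=0$ gives $\mathbb{E}[\hat Z^{\ast}Z] = G_+^{\ast}$. Likewise $\abs*{\hat Z}^2 = \abs{F}^2 = \abs{G_+}^2 + \abs{G_-}^2 + 2\Re\{G_+ G_-^{\ast}\e^{\j2\varphi_0}\}$, whose cross term again has zero mean, so $\mathbb{E}[\abs*{\hat Z}^2] = \abs{G_+}^2 + \abs{G_-}^2$. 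Assuming the ecHT design does not annihilate the band of interest, $\abs{G_+}^2+\abs{G_-}^2>0$, so the nondegeneracy hypothesis $\mathbb{E}[\abs*{\hat Z}^2]>0$ of Theorem~\ref{thm:scalar-calibration} is satisfied and that theorem applies verbatim (including strict convexity and uniqueness).

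Finally I would substitute. Plugging the moments into \eqref{eq:Copt-general} yields $C_{\mathrm{opt}} = G_+^{\ast}/(\abs{G_+}^2+\abs{G_-}^2)$, which is \eqref{eq:Copt-single-tone}. For the minimal MSE, \eqref{eq:Jmin-general} with $\mathbb{E}[\abs{Z}^2]=1$ and $\abs*{\mathbb{E}[\hat Z^{\ast}Z]}^2 = \abs{G_+}^2$ gives $J(C_{\mathrm{opt}}) = 1 - \abs{G_+}^2/(\abs{G_+}^2+\abs{G_-}^2) = \abs{G_-}^2/(\abs{G_+}^2+\abs{G_-}^2)$, which is \eqref{eq:Jmin-single-tone} and has the stated interpretation as the leakage-power fraction. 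I do not expect a genuine obstacle: the only points requiring care are (i) confirming that the factorisation $\hat Z = Z F$ is exactly what makes $\hat Z^{\ast}Z$ collapse to $F^{\ast}$, and (ii) that $\varphi_0$-averaging over the full circle annihilates precisely the $\e^{\pm\j2\varphi_0}$ cross terms while leaving $G_\pm$ untouched — both immediate from the uniform-phase assumption.
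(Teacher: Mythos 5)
Your proposal is correct and follows essentially the same route as the paper's own proof: evaluate $\mathbb{E}[\hat Z^{\ast}Z]=G_+^{\ast}$ and $\mathbb{E}[\abs*{\smash{\hat Z}}^2]=\abs{G_+}^2+\abs{G_-}^2$ using $\hat Z = FZ$, $\abs{Z}=1$, and $\mathbb{E}[\e^{\pm\j2\varphi_0}]=0$, then substitute into \eqref{eq:Copt-general} and \eqref{eq:Jmin-general}. Your explicit check of the nondegeneracy hypothesis $\mathbb{E}[\abs*{\smash{\hat Z}}^2]>0$ is a small addition the paper leaves implicit, but otherwise the arguments coincide.
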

	
	From~\eqref{eq:Copt-single-tone}, \eqref{eq:Jmin-single-tone} the correlation coefficient is also obtained
	\begin{align}
		\rho_{Z\hat Z} = \frac{G_+^*}{\sqrt{\abs{G_+}^2 + \abs{G_-}^2}},
		\qquad
		\abs{\rho_{Z\hat Z}}^2 = \frac{\abs{G_+}^2}{\abs{G_+}^2 + \abs{G_-}^2},
	\end{align}
	where $G_+^*$ is the complex conjugate of $G_+$. So, in the single-tone case the recoverable signal fraction is determined by the ratio of main-lobe gain to total power, and calibration cannot correct the portion of energy that has leaked from the negative frequency.

	\subsection{Calibration Procedure} \label{subsec:calibration}
	
	\begin{algorithm}[!ht]
		\caption{Abstract calibration procedure of the ecHT (with optional $f_0$ estimation)}
		\label{alg:calib}
		\begin{algorithmic}[1]
			\REQUIRE Window length $N$, sampling rate $F_s$, centre frequency $f_0$, bandpass $(l_\text{freq}, h_\text{freq})$, filter order, FFT length $L$, data window $x[0{:}N-1]$.
			\ENSURE Global calibration gain $C_\text{opt}$
			
			\IF{$f_0$ is unstable}
			\STATE Estimate $f_0$ using a suitable spectral estimator
			\ENDIF
			
			\STATE Design $f_0$-centred bandpass $H$
			\STATE Compute $G_\pm$ from \eqref{eq:G+}, \eqref{eq:G-}
			\STATE Compute MSE-optimal calibration gain
			\begin{align}
				C_\text{opt}
				\leftarrow \frac{G_+^*}{\abs{G_+}^2 + \abs{G_-}^2},
			\end{align}
			and residual error
			\begin{align}
				J_\text{opt}
				\leftarrow \frac{\abs{G_-}^2}{\abs{G_+}^2 + \abs{G_-}^2}.
			\end{align}
			\STATE Apply $C_\text{opt}$ as a global complex multiplier to the ecHT output $\hat z$
			\begin{align}
				\hat z_C \leftarrow C_\text{opt} \hat z
			\end{align}
		\end{algorithmic}
	\end{algorithm}
	
	Ideally, calibration is performed on a stable oscillation. Algorithm~\ref{alg:calib} outlines the procedure. If the target frequency changes, $C$ may need to be recomputed. For a fixed choice of window length $N$, sampling rate $F_s$, bandpass parameters $(l_\text{freq}, h_\text{freq})$ and centre frequency $f_0$, the ecHT endpoint error for a finite-length cosine can be written as
	\begin{align}
		F(\varphi_0)
		= G_+ + G_- \e^{-2\j\varphi_0},
	\end{align}
	where $\varphi_0$ is the (unknown) initial phase and $G_\pm \in \mathbb{C}$ are the phase-independent gains of the positive and negative frequency components induced by ecHT. The MSE-optimal global scalar $C_\text{opt}$ is obtained in Theorem~\ref{thm:single-tone-calibration}
	\begin{align}
		C_\text{opt}
		= \frac{\mathbb{E}[F^*]}{\mathbb{E}[\abs{F}^2]}
		= \frac{G_+^*}{\abs{G_+}^2 + \abs{G_-}^2} \approx \frac{1}{G_+},
	\end{align}
	and the corresponding minimal residual error is
	\begin{align}
		J(C_\text{opt})
		= \frac{\abs{G_-}^2}{\abs{G_+}^2 + \abs{G_-}^2}.
	\end{align}
	Thus, the entire calibration is determined by these two complex numbers. $G_+$ describes the desired, phase-aligned response at $f_0$ and $G_-$ quantifies the negative-frequency leakage that cannot be removed by a single scalar. The complexity of the calibration $\mathcal{O}(L)$ is negligible compared to the ecHT's complexity of $\mathcal{O}(L \log L)$.
	
	\section{Metrics}\label{sec:experiments}
	
	To summarise the phase-error distribution between the causal ecHT estimate 
	$\hat \theta$ and an offline narrow-band reference $\theta$, standard phase-synchrony statistics on the wrapped phase difference $\Delta \theta = \operatorname{wrap} (\hat \theta - \theta)$ are used. The phase-locking value (PLV)
	\begin{align}
		\operatorname{PLV} = \abs{\mathbb{E} [\e^{\j \Delta \theta}]},
	\end{align}
	measures the concentration of $\Delta \theta$	on the unit circle ($\operatorname{PLV} \to 1$ for a fixed phase relation; $\operatorname{PLV} \to 0$ for broadly dispersed errors)~\cite{lachauxMeasuringPhaseSynchrony1999}.\\ 
	
	The phase-lag index (PLI) captures the directional asymmetry of the phase-difference distribution~\cite{stamPhaseLagIndex2007}. Here, $\Delta \theta >0$ indicates that the estimate $\hat \theta$ leads the reference and $\Delta \theta < 0$ indicates lag; thus $\operatorname{PLI} \approx 0$ implies a symmetric error distribution, whereas larger PLI imply errors of one sign, independent of the error concentration captured by PLV.
	\begin{align}
		\operatorname{PLI} =  \abs{\mathbb{E} [\operatorname{sign} (\Delta \theta)]}
	\end{align}
	
	
	

	\section{Practical recommendations} \label{sec:practical-rec}
	
	As Liufu et al.~\cite{liufuOptimizingRealtimePhase2025} and Zrenner et al.~\cite{zrennerShakyGroundTruth2020} observed for the ecHT, different parameter settings result in different errors, and a compromise usually needs to be made between a constant phase bias, which is determined by the group delay of the bandpass filter, and cycle-to-cycle ripple/variance, which is caused by finite-window spectral leakage, (including negative-frequency leakage). Global calibration can remove most static bias, but cannot eliminate the variance floor induced by leakage, noise and frequency drift.\\
	
	Increasing the window length $N$ reduces leakage and improves the DFT resolution $\Delta f=F_s/N$, giving alignment of $f_0$ and its nearest bin. In practice, the cleanest endpoint behaviour is achieved when the window spans approximately an integer number of cycles, i.e.
	\begin{align}
		N\frac{f_0}{F_s} - \operatorname{round} \left(N\frac{f_0}{F_s}\right) \approx 0.
	\end{align}	
	However, using longer windows effectively treats the oscillation as stationary over that interval; if the frequency drifts, the error increases. For non-stationary rhythms, Liufu et al. report the best performance around one-cycle windows~\cite{liufuOptimizingRealtimePhase2025}, whereas we noted best performance with two-cycle windows. Also, tapered windows such as the Hann window should be avoided for online endpoint use, as they down-weight the newest samples.
	
	\textbf{Recommendation} Use low-integer-cycle windows when the frequency is variable; otherwise use longer, integer-cycle windows to reduce leakage.\\
	
	A narrow filter bandwidth can best suppress off-target frequency components. However, bandwidth can be seen as a \textit{robustness knob}: narrower is not always better, since narrowing the passband increases group delay, which manifests as a constant phase offset. This offset can be calibrated if $f_0$ is stable, but it also makes the estimate more sensitive to detuning/drift, with phase error growing roughly according to 
	\begin{align}
		\varepsilon_\theta \approx -\Delta \omega \tau_g (\omega_0).
	\end{align}	
	\textbf{Recommendation} Use the narrowest possible bandwidth that still covers the expected $f_0$-variability. If the true frequency is uncertain or drifting, use a wider filter and/or actively track $f_0$, updating the filter and calibration as necessary.\\
	
	Higher-order filters generally increase group delay, thereby steepening the local phase slope around $\omega_0$. Consequently, even a small centre-frequency mismatch produces a phase error (Fig.~\hyperref[fig:simulation]{\ref*{fig:simulation}c}). High filter orders can also make the passband phase less locally linear, reducing predictability.
	
	\textbf{Recommendation} Use second order filters and avoid increasing the order unless you can tolerate extra delay/bias and increased sensitivity to detuning.\\
	
	The performance of the ecHT depends critically on centring the filter at the true oscillation frequency. Any mismatch samples the filter's phase response at the incorrect point, resulting in a phase error proportional to the group delay. This is an unavoidable limitation of causal filtering: even after calibration removes the offset at the nominal $f_0$, frequency drift creates a time-varying bias. For non-stationary oscillations, track $f_0$, and update the filter centre frequency and calibration as needed~\cite{corcoranReliableAutomatedMethod2018}.
	
	\textbf{Recommendation} Continuously re-estimate/track $f_0$ in non-stationary settings; treat calibration as frequency-specific, refreshing it when $f_0$ changes.\\
	
	These choices are secondary, but still important for numerical fidelity. Higher sampling rates generally improve the spectral representation. Zero-padding increases the density of DFT bins, which can help to evaluate $H$ closer to the true $f_0$, but it does not change the fundamental leakage-delay trade-off. Finally, Liufu et al. as well as Fig.~\hyperref[fig:simulation]{\ref*{fig:simulation}e} display little difference across common IIR families (e.g.~Butterworth, Chebyshev, Cauer) for the ecHT performance~\cite{liufuOptimizingRealtimePhase2025}; as their differences become more prominent at higher filter orders. Butterworth is a reasonable default.
	
	\textbf{Recommendation} Use zero-padding only for finer frequency grid evaluation. Prefer a sufficient $F_s$ and default to Butterworth unless you have a specific reason not to.\\
	
	We recommend the following calibration practices.
	\begin{itemize}
		\item \textbf{Stable rhythms} For signals with a time-invariant centre frequency, compute $C_\text{opt}$ once during system initialisation and apply it globally.
		
		\item \textbf{Non-stationary rhythms} For signals with drifting frequency, re-estimate $f_0$ depending on your signal properties and update $C_\text{opt}$ accordingly.
		
		\item \textbf{Multiple target frequencies} For applications requiring simultaneous tracking of multiple rhythms, compute frequency-specific calibration factors $C_{\text{opt}}^{f_{i}}$ for each band and apply them independently to parallel ecHT channels.
	\end{itemize}
	
	
	
	\section{Proofs \& Derivations}
	
	\subsection{Deterministic phase-ripple and amplitude bounds} \label{sec:det-bounds}

	\begin{proof}[Proof of Theorem~\ref{thm:det-bounds}]
		Fix $\varphi_0\in\mathbb{R}$.
		\begin{align}
			F(\varphi_0)=G_+ + G_-\e^{-\j2\varphi_0}
			=G_+\bigl(1+u(\varphi_0)\bigr),
			\qquad
			u(\varphi_0)=\frac{G_-}{G_+}\e^{-\j2\varphi_0}
		\end{align}
		Then $\abs{u(\varphi_0)}=\abs{G_-}/\abs{G_+}=r\in[0,1]$, with $\alpha=\arg G_+$,
		\begin{align}
			\delta_\theta(\varphi_0)=\arg \{F(\varphi_0)\}-\alpha=\arg \{1+u(\varphi_0)\}.
		\end{align}
		
		\paragraph{Phase-ripple bound.}
		The goal is to bound $\abs{\delta_\theta(\varphi_0)}$ uniformly in the unknown	initial phase $\varphi_0$. Since $\delta_\theta(\varphi_0)$ depends on $\varphi_0$	only through $u(\varphi_0)$, it suffices to characterise the set of all possible values of $\arg\{1+u\}$.\\
		
		Because $\abs{u(\varphi_0)}=r$ is fixed while the phase of $u(\varphi_0)$ varies with $\varphi_0$, we may write $u(\varphi_0)=r\e^{\j\psi}$ for some $\psi\in\mathbb{R}$. Hence
		\begin{align}
			\delta_\theta(\varphi_0)=\arg \left\lbrace 1+r\e^{\j\psi}\right\rbrace,
		\end{align}
		and a uniform bound in $\varphi_0$ follows from a bound on
		$\max_{\psi\in\mathbb{R}} \abs{\arg\{1+r\e^{\j\psi}\}}$.\\
		
		For notational convenience, we define
		\begin{align}
			w = 1+r\e^{\j\psi} = (1+r\cos\psi) + \j(r\sin\psi).
		\end{align}
		If $0\leq r<1$, then $\Re w=1+r\cos\psi\geq 1-r>0$, hence $\arg w \in[-\pi/2,\pi/2]$ and $\sin(\cdot)$ is strictly increasing on $[0,\pi/2]$, and thus bounding $\abs{\arg w}$ is equivalent to bounding $\abs{\sin(\arg w)}$:
		\begin{align}
			\abs{\arg w} \leq \arcsin r \quad \Leftrightarrow \quad
			\abs{\sin(\arg w)}\leq r.
			\label{eq:sinarg1}
		\end{align}
		
		Using $\sin(\arg w)=\Im \{w\}/\abs{w}$ we get
		\begin{align}
			\abs{\sin(\arg w)}
			=\frac{\abs{r\sin\psi}}{\sqrt{(1+r\cos\psi)^2+(r\sin\psi)^2}}
			=\frac{r\abs{\sin\psi}}{\sqrt{1+r^2+2r\cos\psi}}.
			\label{eq:sinarg2}
		\end{align}
		Substituting \eqref{eq:sinarg2} into \eqref{eq:sinarg1}, then $\abs{\sin(\arg w)}\leq r$ is equivalent to
		\begin{align}
			\abs{\sin\psi} \leq \sqrt{1+r^2+2r\cos\psi}.
		\end{align}
		If $r=0$ the bound is trivial. Squaring both sides gives
		\begin{align}
			\begin{split}
				\sin^2\psi &\leq 1+r^2+2r\cos\psi\\
				1-\cos^2\psi &\leq 1+r^2+2r\cos\psi\\
				(\cos\psi+r)^2 &\geq 0,
			\end{split}
		\end{align}
		which is always true. Hence $\abs{\sin(\arg w)}\leq r$, and since
		$\arg w\in[-\pi/2,\pi/2]$ we conclude
		\begin{align}
			\abs{\arg w} \leq \arcsin r.
		\end{align}
		Recalling $\delta_\theta(\varphi_0)=\arg w$ yields
		\begin{align}
			\abs{\delta_\theta(\varphi_0)} \leq \arcsin r.
		\end{align}
		
		For the boundary case $r=1$, the same inequality implies $\abs{\arg w}\leq \arcsin(1)=\pi/2$ whenever $w\neq 0$; at the isolated phases where $w=0$ (equivalently $F(\varphi_0)=0$) the argument is undefined and the bound is understood in the limiting sense.
		
		\paragraph{Amplitude bound.}
		From $F(\varphi_0)=G_+ + G_-\e^{-\j2\varphi_0}$ and the reverse triangle inequality,
		\begin{align}
			\bigl|\abs{F(\varphi_0)}-\abs{G_+}\bigr| \leq \abs{G_-}.
		\end{align}
		Then applying the triangle inequality on the real line gives
		\begin{align}
			\bigl|\abs{F(\varphi_0)}-1\bigr|
			\leq \bigl|\abs{F(\varphi_0)}-\abs{G_+}\bigr| + \bigl|\abs{G_+}-1\bigr|
			\leq \abs{G_-} + \bigl|\abs{G_+}-1\bigr|.
		\end{align}
		Since $\varepsilon_A(\varphi_0)=\abs{F(\varphi_0)}-1$, this is exactly the claimed bound.
	\end{proof}

	\subsection{Mean-square optimal scalar calibration}
	\label{sec:scalar-calibration-general-proof}
	
	\begin{proof}[Proof of Theorem~\ref{thm:scalar-calibration}]
		Here and throughout, expectations are taken with respect to the distribution of window-level endpoint estimates (across windows, trials, or repetitions), not with respect to time within a single window. Assume finite second moments of $Z$, $\hat Z$
		\begin{align}
			\mathbb{E}[\abs*{Z}^2] < \infty, \qquad
			\mathbb{E}[\abs*{\smash{\hat Z}}^2] < \infty, \qquad
			\mathbb{E}[\abs*{\smash{\hat Z}}^2] > 0
		\end{align}
		which excludes the degenerate case $\hat Z = 0$ almost surely.\\
		
		The squared magnitude in \eqref{eq:J-def} is expanded as
		\begin{align}
			\begin{split}
				J(C)
				&= \mathbb{E}\left[(C \hat{Z} - Z)(C \hat{Z} - Z)^*\right] \\
				&= \mathbb{E}[\abs*{C \hat{Z}}^2]
				- \mathbb{E}[C \hat{Z} Z^*]
				- \mathbb{E}[C^* \hat{Z}^* Z]
				+ \mathbb{E}[\abs{Z}^2].
			\end{split}
		\end{align}
		Linearity of expectation and the fact that $C$ is deterministic lead to
		\begin{align}
			J(C)
			= \abs{C}^2 \,\mathbb{E}[\abs*{\hat{Z}}^2]
			- C\,\mathbb{E}[\hat{Z} Z^*]
			- C^*\,\mathbb{E}[\hat{Z}^* Z]
			+ \mathbb{E}[\abs{Z}^2].
		\end{align}
		Introducing the shorthand
		\begin{align}
			\kappa = \mathbb{E}[\abs*{\hat{Z}}^2], 
			\quad
			\nu = \mathbb{E}[\hat{Z}^* Z],
			\quad
			\xi = \mathbb{E}[\abs{Z}^2], \quad
		\end{align}
		gives the quadratic form
		\begin{align}
			J(C) = \kappa \abs{C}^2 - \nu^* C - \nu C^* + \xi.
			\label{eq:J-quadratic2}
		\end{align}
		Because $\kappa>0$, the function $J$ is strictly convex in $C$. Completion of the square yields
		\begin{align}
			\begin{split}
				\kappa\abs{C}^2 - \nu^* C - \nu C^*
				+ \frac{\abs{\nu}^2}{\kappa} &= \kappa \left[CC^* - C\frac{\nu^*}{\kappa} - C^* \frac{\nu}{\kappa} + \frac{\nu}{\kappa} \frac{\nu^*}{\kappa}\right] \\
				&= \kappa\abs{C - \frac{\nu}{\kappa}}^2.
			\end{split}
		\end{align}
		Comparing with \eqref{eq:J-quadratic2} using $\kappa\in\mathbb{R}$ gives
		\begin{align}
			J(C)
			= \kappa\abs{C - \frac{\nu}{\kappa}}^2
			+ \xi - \frac{\abs{\nu}^2}{\kappa}.
		\end{align}
		The term $\kappa\abs{C - \nu/\kappa}^2$ is nonnegative and equals zero if and only if
		\begin{align}
			C = C_\text{opt} = \frac{\nu}{\kappa}
			= \frac{\mathbb{E}[\hat{Z}^* Z]}{\mathbb{E}[\abs*{\hat{Z}}^2]}.
		\end{align}
		This expression coincides with \eqref{eq:Copt-general} and is the unique minimiser of $J$. Evaluating $J$ at $C = C_\text{opt}$ gives
		\begin{align}
			J\left(C_\text{opt}\right)
			= \xi - \frac{\abs{\nu}^2}{\kappa}
			= \mathbb{E}[\abs{Z}^2]
			- \frac{\abs{\mathbb{E}[\hat{Z}^* Z]}^2}{\mathbb{E}[\abs*{\hat{Z}}^2]},
		\end{align}
		which is \eqref{eq:Jmin-general}. Since $C_\text{opt}$ is the global minimiser, $J(C_\text{opt}) \leq J(C)\ \forall\ C \in \mathbb{C}$.
	\end{proof}
	
	\subsection{Correlation form of the minimal MSE} \label{sec:correlation-general-proof}
	
	\begin{proof}[Proof of Corollary~\ref{cor:correlation-form}]
		By the definition of $\rho_{Z\hat{Z}}$,
		\begin{align}
			\abs*{\mathbb{E}[\hat{Z}^* Z]}^2
			= \abs{\rho_{Z\hat{Z}}}^2
			\,\mathbb{E}[\abs*{\hat{Z}}^2]\,\mathbb{E}[\abs{Z}^2].
		\end{align}
		Substitution into \eqref{eq:Jmin-general} yields
		\begin{align}
			\begin{split}
				J\left(C_\text{opt}\right)
				&= \mathbb{E}[\abs{Z}^2]
				- \frac{\abs{\rho_{Z\hat{Z}}}^2
					\,\mathbb{E}[\abs*{\hat{Z}}^2]\,\mathbb{E}[\abs{Z}^2]}
				{\mathbb{E}[\abs*{\hat{Z}}^2]} \\
				&= \mathbb{E}[\abs{Z}^2]\left(1 - \abs{\rho_{Z\hat{Z}}}^2\right),
			\end{split}
		\end{align}
		which is \eqref{eq:Jmin-corr}. The Cauchy-Schwarz inequality implies
		$\abs*{\rho_{Z\hat{Z}}} \leq 1$, hence $0 \leq 1 - \abs*{\rho_{Z\hat{Z}}}^2 \leq 1$ and
		\begin{align}
			0 \leq J\left(C_\text{opt}\right) \leq \mathbb{E}[\abs{Z}^2].
		\end{align}
		Equality $J(C_\text{opt}) = 0$ holds if and only if $\abs{\rho_{Z\hat{Z}}} = 1$, which is equivalent to $\hat{Z} = C Z$ almost surely for some constant $C$. Equality $J(C_\text{opt}) = \mathbb{E}[\abs{Z}^2]$ holds if and only if	$\rho_{Z\hat{Z}} = 0$, i.e.\ $\hat{Z}$ and $Z$ are uncorrelated.
	\end{proof}
	
	Corollary~\ref{cor:correlation-form} shows that, after calibration, the minimal achievable mean-square error is entirely determined by the squared magnitude of the complex correlation coefficient between the ecHT endpoint $\hat{Z}$	and the ideal analytic signal $Z$. The remaining error is equal to the signal power scaled by $(1-\abs*{\rho_{Z\hat{Z}}}^2)$, which quantifies the fraction of signal energy that is fundamentally unrecoverable by any linear calibration. Perfect correlation $\abs*{\rho_{Z\hat{Z}}}^2 = 1$ implies that calibration can fully correct the ecHT, whereas zero correlation means that the ecHT output contains no usable information about the true signal. Thus, correlation (not calibration) sets the fundamental accuracy limit of the cecHT, and all sources of distortion impact performance only insofar as they reduce this correlation.\\
	
	Geometrically, the calibrated estimate $C_\text{opt}\hat Z$ is the orthogonal projection of $Z$ onto the one-dimensional subspace spanned by $\hat Z$ in the Hilbert space of complex random variables, and $J(C_\text{opt})$ equals the squared norm of the irreducible projection residual. Equivalently, the factor $\abs*{\rho_{Z\hat Z}}^2$ represents the fraction of the analytic signal energy that is linearly recoverable from the ecHT output, while $1 - \abs*{\rho_{Z\hat Z}}^2$ quantifies the fundamental energy loss induced by filtering, windowing, and spectral leakage.
	
	\subsection{Bias-variance decomposition} \label{sec:derivation_bias_variance}
	
	For any scalar calibration $C$, the mean-square error admits the decomposition
	\begin{align}
		J(C) 
		= \abs*{\mathbb{E}[C\hat Z - Z]}^2 
		+ \mathbb{E}\left[ \abs*{(C\hat Z - Z) - \mathbb{E}[C\hat Z - Z]}^2\right],
	\end{align}
	i.e.~into a squared bias term and a variance term. The bias term captures the deterministic distortions caused by filter delay, amplitude misscaling, and systematic phase offsets.~The variance term, on the other hand, measures the inconsistency in the estimate from one cycle to the next, which is caused by noise, spectral leakage and phase-dependent interference between positive and negative frequencies. The optimal calibration $C_\text{opt}$ minimises both contributions jointly: it aligns the ecHT estimate with the ideal analytic signal as closely as possible on average, while also yielding the smallest possible endpoint variability of all scalar calibrations. Consequently, any further improvement beyond calibration must reduce either leakage or noise in the ecHT itself, as post-hoc scaling cannot reduce the residual variance below this limit.\\
	
	Define the complex-valued endpoint error
	\begin{align}
		e = C\hat Z - Z,
	\end{align}
	so that
	\begin{align}
		J(C) = \mathbb{E} [\abs*{C\hat Z - Z}^2 ] = \mathbb{E} [\abs{e}^2 ].
	\end{align}
	Let
	\begin{align}
		\mu = \mathbb{E}[e] = \mathbb{E}[C\hat Z - Z]
	\end{align}
	denote the mean error. Then $e$ can be written as
	\begin{align}
		e = \mu + (e - \mu),
	\end{align}
	where $e - \mu$ is a zero-mean fluctuation. Substituting this into $\abs{e}^2$ yields
	\begin{align}
		\begin{split}
			\abs{e}^2
			&= \abs{\mu + (e - \mu)}^2 \\
			&= (\mu + (e - \mu))(\mu + (e - \mu))^* \\
			&= (\mu + (e - \mu))(\mu^* + (e - \mu)^*) \\
			&= \abs{\mu}^2 + \abs{e - \mu}^2 + \mu^*(e - \mu) + \mu(e - \mu)^*.
		\end{split}
	\end{align}
	Taking expectations on both sides gives
	\begin{align}
		\mathbb{E}[\abs{e}^2]
		&= \mathbb{E}[\abs{\mu}^2] 
		+ \mathbb{E}[\abs{e - \mu}^2] 
		+ \mathbb{E}[\mu^*(e - \mu)] 
		+ \mathbb{E}[\mu(e - \mu)^*].
	\end{align}
	Since $\mu$ is deterministic, $\mathbb{E}[\abs{\mu}^2] = \abs{\mu}^2$. Moreover,
	\begin{align}
		\mathbb{E}[\mu^*(e - \mu)] = \mu^*\,\mathbb{E}[e - \mu] 
		= \mu^*(\mathbb{E}[e] - \mu) 
		= \mu^*(\mu - \mu) = 0,
	\end{align}
	and similarly $\mathbb{E}[\mu(e - \mu)^*] = 0$, because $\mathbb{E}[e - \mu] = 0$ by construction. Hence
	\begin{align}
		\mathbb{E}[\abs{e}^2] = \abs{\mu}^2 + \mathbb{E}[\abs{e - \mu}^2].
	\end{align}
	Substituting back $e = C\hat Z - Z$ and $\mu = \mathbb{E}[C\hat Z - Z]$ yields the bias-variance decomposition
	\begin{align}
		\begin{split}
			J(C)
			&= \mathbb{E}\left[\abs*{C\hat Z - Z}^2\right]\\
			&= \abs*{\mathbb{E}[C\hat Z - Z]}^2
			+ \mathbb{E}\left[\abs*{(C\hat Z - Z) - \mathbb{E}[C\hat Z - Z]}^2\right].
		\end{split}
	\end{align}

	\subsection{Large-sample properties of the empirical calibration} \label{sec:large-sample-conv}
	
	We estimate the optimal scalar calibration
	\begin{align}
		C_\text{opt} = \frac{\mathbb{E}[\hat Z^* Z]}{\mathbb{E}[\abs*{\hat Z}^2]}
	\end{align}
	from $M$ consecutive windows. Because windows are extracted from a time series, the resulting sequence $\{(Z_i,\hat Z_i)\}_{i\geq 1}$ is typically not iid; neighbouring windows may be correlated. The goal is to state conditions under which the ratio estimator
	\begin{align}
		\hat C_M = \frac{\frac 1M \sum_{i=0}^{M-1} \hat Z_i^* Z_i}{\frac 1M \sum_{i=0}^{M-1} \abs*{\hat Z_i}^2}
	\end{align}
	is (1) consistent and (2) asymptotically normal as $M\to \infty$.\\
	
	Define
	\begin{align}
		A_i = \hat Z_i^* Z_i,
		\quad
		B_i = \abs*{\hat Z_i}^2,
		\quad
		\bar A_M = \frac{1}{M}\sum_{i=0}^{M-1} A_i,
		\quad
		\bar B_M = \frac{1}{M}\sum_{i=0}^{M-1} B_i,
	\end{align}
	so that $\hat C_M = \bar A_M / \bar B_M$ and $C_\text{opt} = \mathbb{E}[A_1] / \mathbb{E}[B_1]$.\\
	
	The assumptions below are intended for window-level features $(Z_i, \hat Z_i)$, not raw samples: each window aggregates many samples, so correlations between widely separated windows are typically weaker than correlations between adjacent raw samples.
	\begin{enumerate}
		\item[(i)] \textbf{Strict stationarity} $\{(A_i,B_i)\}$ is invariant under time shifts (e.g.~$\mathbb{E}[A_i]=\mathbb{E}[A_1]$ and $\mathbb{E}[B_i]=\mathbb{E}[B_1]$). This ensures $\mathbb{E}[A_i]$ and $\mathbb{E}[B_i]$ do not depend on $i$, i.e.~the target $C_\text{opt}$ is time-invariant.
		\item[(ii)] \textbf{Ergodicity} Time averages converge to ensemble averages. Concretely, ergodicity delivers a law of large numbers for dependent data, so $\bar A_M \to \mathbb{E}[A_1]$ and $\bar B_M \to \mathbb{E}[B_1]$ even without independence.
		\item[(iii)] \textbf{Weak dependence/Strong mixing} Dependence between windows decays with lag. This is not needed to define $C_\text{opt}$ but it is what enables a central limit theorem (CLT) and therefore uncertainty quantification for $\hat C_M$.
		\item[(iv)] \textbf{Finite moments and non-degeneracy} $\mathbb{E}[A_1] < \infty$, $\mathbb{E}[B_1] < \infty$, and $\mathbb{E}[B_1] > 0$. The first two conditions ensure integrability. The last condition prevents a denominator in the derivation from collapsing and guarantees the ratio map is well behaved near the limit.
		\item[(v)] \textbf{Finite higher moments} $\mathbb{E}[\abs{A_1}^{2+\delta}] < \infty$ and $\mathbb{E}[\abs{B_1}^{2+\delta}] < \infty$	for some $\delta>0$.
	\end{enumerate}
	Define the empirical calibration based on $M$ windows by
	\begin{align}
		\hat C_M
		= \frac{\frac{1}{M}\sum_{i=0}^{M-1} \hat Z_i^* Z_i}{\frac{1}{M}\sum_{i=0}^{M-1} \abs*{\hat Z_i}^2}
		= \frac{\bar A_M}{\bar B_M},
	\end{align}
	and let
	\begin{align}
		C_\text{opt} = \frac{\mathbb{E}[\hat Z^* Z]}{\mathbb{E}[\abs*{\hat Z}^2]}
		= \frac{\mathbb{E}[A_1]}{\mathbb{E}[B_1]}
	\end{align}
	denote the theoretical MSE-optimal scalar calibration. 
	
	\begin{proof}[Proof of Corollary~\ref{thm:asymptotics}]
		\emph{(1) Consistency (almost sure convergence).}
		By assumption, the sequence $\{(A_i,B_i)\}_{i\geq 1}$ is strictly stationary (i) and ergodic (ii). Because the sequence is stationary and ergodic and integrable, sample averages of $A_i$ and $B_i$ converge almost surely to their expectations, the strong law of large numbers for such processes implies
		\begin{align}
			\bar A_M \xrightarrow{\text{a.s.}} \mathbb{E}[A_1],
			\qquad
			\bar B_M \xrightarrow{\text{a.s.}} \mathbb{E}[B_1],
			\label{eq:A_M-to-A_1}
		\end{align}
		as $M\to\infty$, provided the first moments are finite (iv). Ergodicity and suitable weak dependence is sufficient to ensure time averages converge to expectations.\\ 
		
		Assumption (iv) includes $\mathbb{E}[B_1] >0$. Since $\bar B_M \to \mathbb{E}[B_1]$ almost surely, once $\bar B_M$ is sufficiently close to $\mathbb{E}[B_1]$, it must be positive and bounded away from zero almost surely. Formally, there exists an $M_0$ such that for all $M\geq M_0$,
		\begin{align}
			\abs{\bar B_M - \mathbb{E}[B_1]} < \frac 12 \mathbb{E}[B_1].
		\end{align}
		Then for those $M$,
		\begin{align}
			\bar B_M \geq \frac 12 \mathbb{E}[B_1] > 0.
			\label{eq:B_M}
		\end{align}
		So, the ratio $\bar A_M / \bar B_M$ is eventually well-defined and stable.\\
		
		An algebraic bound can be found
		\begin{align}
			\hat C_M - C_\text{opt} = \frac{\bar A_M}{\bar B_M} - \frac{\mu_A}{\mu_B} = \frac{\mu_B (\bar A_M -\mu_A) - \mu_A (\bar B_M - \mu_B)}{\bar B_M \mu_B},
		\end{align}
		where $\mu_A=\mathbb{E}[A_1]$, $\mu_B = \mathbb{E}[B_1]$. The numerator converges to zero due to \eqref{eq:A_M-to-A_1}, and \eqref{eq:B_M} ensures a finite denominator. This proves the empirical calibration is a strongly consistent estimator of $C_\text{opt}$.\\
		
		\noindent
		\emph{(2) Asymptotic normality.}
		Under assumptions (i), (iii) and (v), a multivariate central limit theorem for weakly dependent stationary sequences applies to the real-valued vector process. (i) ensures stationary sequences, (iii) is the main assumption enabling the CLT with rate $\sqrt{M}$ and a well-defined covariance, (v) ensures the mixing CLT applies and the long-run covariance exists.
		\begin{align}
			\vec{Y}_i =
			\begin{pmatrix}
				\Re A_i \\
				\Im A_i \\
				B_i
			\end{pmatrix},
			\qquad
			\vec{\bar Y}_M = \frac{1}{M}\sum_{i=0}^{M-1} \vec{Y}_i
		\end{align}
		Weak dependence plus a $2+\delta$ moment condition allow a multivariate CLT for $\vec{\bar Y}_M$ with a long-run covariance $\mat{\Sigma}_Y$ that captures serial correlation, assuming the series converges. In the iid case, this reduces to $\mat{\Sigma}_Y = \operatorname{Cov}(\vec{Y}_1)$.
		\begin{align}
			\sqrt{M}\left(\vec{\bar Y}_M - \mathbb{E}[\vec{Y}_1]\right)
			\xrightarrow{d} \mathcal{N}(\vec{0},\mat{\Sigma}_Y),
			\qquad 
			\mat{\Sigma}_Y = \sum_{l\in \mathbb{Z}} \operatorname{Cov}(\vec{Y}_1, \vec{Y}_{1+l})
			\label{eq:distr-conv-Y_M}.
		\end{align}
		From the definitions,
		\begin{align}
			\bar A_M = \bar Y_{M,1} + \j \bar Y_{M,2}, \qquad \bar B_M = \bar Y_{M,3}.
		\end{align}
		Hence the estimator is not linear, but a smooth function of means,
		\begin{align}
			\hat C_M = \frac{\bar A_M}{\bar B_M} = \frac{\bar Y_{M,1} + \j \bar Y_{M,2}}{\bar Y_{M,3}}.
		\end{align}
		The delta method replaces a smooth nonlinear function by its first-order Taylor expansion around the mean, so asymptotically the estimator behaves like a linear function of a Gaussian vector~\cite{casellaStatisticalInference2002}. In this way, the empirical calibration $\hat C_M$ can be written as a smooth function of $\vec{\bar Y}_M$. For this purpose, we define a real mapping $f\!\!: \mathbb{R}^3 \to \mathbb{R}^2$ that outputs real and imaginary parts.
		\begin{align}
			f(x,y,b) = \begin{pmatrix}
				x/b\\
				y/b
			\end{pmatrix}
		\end{align}
		Then $f(\vec{\bar Y}_M) = (\Re \hat C_M, \Im \hat C_M)$. By assumption (iv) $\mu_B = \mathbb{E}[B_1]>0$ ensures that $f$ is differentiable in a neighbourhood of $\vec{\mu}$.
		\begin{align}
			f(\vec{\bar Y}_M) &\approx f(\vec{\mu}) + \mat{J}_f(\vec{\mu}) (\vec{\bar Y}_M - \vec{\mu})
		\end{align}
		Multiplying by $\sqrt{M}$, we obtain
		\begin{align}
			\sqrt{M} (f(\vec{\bar Y}_M) - f(\vec{\mu})) \approx \mat{J}_f (\vec{\mu}) \sqrt{M} (\vec{\bar Y}_M - \vec{\mu}).
		\end{align}
		Since the right-hand side converges to a Gaussian, so does the left-hand side.\\
		
		Computing the Jacobian $\mat{J}_f (\vec{\mu})$ with $\vec{\mu} = \mathbb{E} [\vec{Y}_1] = (\mu_x, \mu_y, \mu_B)$ yields
		\begin{align}
			\mat{J}_f (\mu) = \begin{pmatrix}
				\pdv{x} \frac xb & \pdv{y} \frac xb & \pdv{b} \frac xb\\
				\pdv{x} \frac yb & \pdv{y} \frac yb & \pdv{b} \frac yb
			\end{pmatrix} 
			= \begin{pmatrix}
				\frac{1}{\mu_B} & 0 & -\frac{\mu_x}{\mu_B^2}\\
				0 & \frac{1}{\mu_B} & -\frac{\mu_y}{\mu_B^2}
			\end{pmatrix}.
		\end{align}
		If \eqref{eq:distr-conv-Y_M} and $f$ is differentiable in $\vec{\mu}$,
		\begin{align}
			\sqrt{M} (f(\vec{\bar Y}_M) - f(\vec{\mu})) \xrightarrow{d} \mathcal{N}(\vec{0}, \mat{J}_f \mat{\Sigma}_Y \mat{J}_f^\mathsf{T}).
		\end{align}
		But $f(\vec{\mu}) = (\mu_x/\mu_B, \mu_y/\mu_B)$, which is exactly $(\Re C_\text{opt}, \Im C_\text{opt})$ because $\mu_x + \j \mu_y = \mathbb{E}[A_1]$ and $\mu_B= \mathbb{E}[B_1]$. Also $f(\vec{\bar Y}_M) = (\Re \hat C_M, \Im \hat C_M)$. Therefore,
		\begin{align}
			\sqrt{M} \begin{pmatrix}
				\Re\{\hat C_M - C_\text{opt}\}\\
				\Im\{\hat C_M - C_\text{opt}\}
			\end{pmatrix}
			\xrightarrow{d} \mathcal{N}_{\mathbb{C}} (\vec{0}, \mat{\Sigma}_C), \qquad \mat{\Sigma}_C = \mat{J}_f (\vec{\mu}) \mat{\Sigma}_Y \mat{J}_f (\vec{\mu})^\mathsf{T}
		\end{align}
		Finally, since a complex number is equivalent to its 2D real vector $(\Re\cdot, \Im\cdot)$, the above 2D real CLT can be summarised as
		\begin{align}
			\sqrt{M}\left(\hat C_M - C_\text{opt}\right)
			\xrightarrow{d}
			\mathcal{N}_{\mathbb{C}}(\vec{0},\mat{\Sigma}_C).
		\end{align}
		In particular, the variance of $\hat C_M$ decreases at the rate of $1/M$, and the real and imaginary parts of the calibration error are asymptotically jointly Gaussian.
	\end{proof}
	
	As mentioned previously, the assumptions of this derivation are not intended to model raw samples. Instead, the sequence $\{(Z_i,\hat Z_i)\}$ is formed from \emph{window-level endpoint estimates}, which aggregate many raw samples into a single complex-valued quantity. For band-limited, approximately stationary signals, such window-level features can be well approximated by stationary, weakly dependent processes over short lengths~\cite{blancoStationarityEEGSeries1995,florianDynamicSpectralAnalysis1995,fingelkurtsEditorialEEGPhenomenology2010,lawhernDetectingAlphaSpindle2013}. In this setting, the effective number of independent observations is determined by the number of usable windows $M$ rather than by the raw sampling rate. The above theorem states that the empirical calibration converges to the optimal Wiener gain and that its estimation error decays as $1/\sqrt{M}$ even in the presence of temporal correlations typical of EEG data.

	\subsection{Optimal scalar calibration for a single tone}
	\label{sec:scalar-calibration-proof}
	
	For the finite length cosine analysed in section~\ref{sec:error-analysis}, the ecHT endpoint obeys
	\begin{align}
		F = G_+ + G_- \e^{-\j 2\varphi_0},
	\end{align}
	where $G_+$ is the effective complex gain for the positive-frequency component and $G_-$ the effective gain (leakage) for the negative-frequency component. The initial phase $\varphi_0$ is modelled as random with $\varphi_0 \sim \mathcal{U}(-\pi, \pi)$ to represent an unknown phase offset.\\
	
	The average error over all possible initial phase offsets can be stated to get an expected error that does not depend on the arbitrary phase of the input signal. No closed-form solution exists for these integrals, but they can be evaluated numerically.
	\begin{align}
		\overline{\varepsilon}_\theta &= \mathbb{E} [\varepsilon_\theta] = \frac{1}{2\pi} \int_{-\pi}^\pi \arg F \dd{\varphi_0}\\
		\overline{\varepsilon}_A &= \mathbb{E} [\varepsilon_A] = \frac{1}{2\pi} \int_{-\pi}^\pi \abs{F}-1 \dd{\varphi_0}		
	\end{align}
	The mean-squared error can also be stated directly since $\mathbb{E} [\e^{\pm \j 2\varphi_0}] = 0$.
	\begin{align}
		\begin{split}
			J &= \mathbb{E} \abs{F-1}^2\\
			&= \mathbb{E} \left[ \abs{G_+-1}^2 + \abs{G_-}^2 + (G_+-1)G_-^* \e^{\j 2\varphi_0} + (G_+-1)^*G_- \e^{-\j 2\varphi_0} \right]\\
			&= \abs{G_+-1}^2 + \abs{G_-}^2
		\end{split}
	\end{align}
	$\abs{G_+-1}^2$ contributes systematic distortion of the main (positive) component and $\abs{G_-}^2$ contributes leakage from the negative component. The general result from Section~\ref{sec:calibration-general-signal} is specialised in this section.\\
	
	Under the assumptions above, the MSE-optimal scalar calibration factor 	\eqref{eq:Copt-general} specialises to
	\begin{align}
		C_\text{opt}
		=
		\frac{G_+^*}{\abs{G_+}^2 + \abs{G_-}^2}.
	\end{align}
	The corresponding minimal mean-square error is
	\begin{align}
		J\left(C_\text{opt}\right)
		= \frac{\abs{G_-}^2}{\abs{G_+}^2 + \abs{G_-}^2}.
	\end{align}
	
	\begin{proof}[Proof of Theorem~\ref{thm:single-tone-calibration}]
		Since $\hat{Z} = F Z$ and $\abs{Z} = 1$, it follows that
		\begin{align}
			\hat{Z}^* Z = F^*,
			\qquad
			\abs*{\hat{Z}}^2 = \abs{F}^2.
		\end{align}
		Taking expectations with respect to $\varphi_0$,
		\begin{align}
			\mathbb{E}[\hat{Z}^* Z]
			= \mathbb{E}[F^*]
			= \mathbb{E}[G_+^* + G_-^* \e^{\j2\varphi_0}]
			= G_+^*,
		\end{align}
		because $\mathbb{E}[e^{\pm\j2\varphi_0}] = 0$ for $\varphi_0 \sim \mathcal{U}(-\pi,\pi)$.
		Furthermore,
		\begin{align}
			\label{eq:F=G++G-}
			\begin{split}
				\mathbb{E}[\abs*{\hat{Z}}^2]
				&= \mathbb{E}[\abs{F}^2]
				= \mathbb{E}[\abs{G_+ + G_- \e^{-\j 2\varphi_0}}^2]  \\
				&= \mathbb{E}\left[\abs{G_+}^2 + \abs{G_-}^2 
				+ G_+ G_-^* \e^{\j2\varphi_0} + G_+^* G_- \e^{-\j2\varphi_0}\right]\\
				&= \abs{G_+}^2 + \abs{G_-}^2.
			\end{split}
		\end{align}
		Insertion into \eqref{eq:Copt-general} yields
		\begin{align}
			C_\text{opt}
			= \frac{\mathbb{E}[\hat{Z}^* Z]}{\mathbb{E}[\abs*{\hat{Z}}^2]}
			= \frac{G_+^*}{\abs{G_+}^2 + \abs{G_-}^2},
		\end{align}
		which is \eqref{eq:Copt-single-tone}. For the minimal MSE, \eqref{eq:Jmin-general}
		is used with $\mathbb{E}[\abs{Z}^2] = 1$:
		\begin{align}
			\begin{split}
				J\left(C_\text{opt}\right)
				&= 1 - \frac{\abs*{\mathbb{E}[\hat{Z}^* Z]}^2}{\mathbb{E}[\abs*{\hat{Z}}^2]}  \\
				&= 1 - \frac{\abs{G_+}^2}{\abs{G_+}^2 + \abs{G_-}^2}
				= \frac{\abs{G_-}^2}{\abs{G_+}^2 + \abs{G_-}^2},
			\end{split}
		\end{align}
		which proves \eqref{eq:Jmin-single-tone}.
	\end{proof}
	
	Thus, for a single tone the best scalar calibration depends only on the	positive-frequency gain $G_+$ and the negative-frequency leakage $G_-$, and the minimal achievable MSE is exactly the fraction of leakage power $\abs{G_-}^2$ in the total ecHT endpoint power $\abs{G_+}^2 + \abs{G_-}^2$. The coefficient $C_\text{opt}$ coincides with the scalar Wiener filter that produces the linear minimum-variance estimate of $Z$ from the observation $\hat{Z}$ in the single-tone setting.\\
	
	The corresponding correlation coefficient and deterministic bounds on phase and	amplitude error can also be expressed explicitly.
	\begin{align}
		\rho_{Z\hat{Z}}
		= \frac{\mathbb{E}[\hat{Z}^* Z]}
		{\sqrt{\mathbb{E}[\abs*{\hat{Z}}^2]\;\mathbb{E}[\abs{Z}^2]}}
		= \frac{G_+^*}{\sqrt{\abs{G_+}^2 + \abs{G_-}^2}},
	\end{align}
	so that
	\begin{align}
		\abs{\rho_{Z\hat{Z}}}^2 = \frac{\abs{G_+}^2}{\abs{G_+}^2 + \abs{G_-}^2},
	\end{align}
	and the minimal MSE \eqref{eq:Jmin-single-tone} can be written as
	\begin{align}
		J\left(C_\text{opt}\right)
		= 1 - \abs{\rho_{Z\hat{Z}}}^2.
	\end{align}
	
	Thus, the leakage power $\abs{G_-}^2$ directly reduces the squared correlation between the ecHT endpoint and the ideal analytic endpoint, and therefore limits the minimal achievable MSE.

	\section{Discussion of argument function} \label{sec:phase1 vs phase2}
	
	There are two natural but not equivalent ways to define the instantaneous phase from the analytic signal.
	
	\subsection{Method 1: Phase of the analytic signal}
	
	The first definition takes the phase directly as the argument of the complex analytic signal.
	\begin{align}
		z(n) &= x(n) + \j \mathcal{H}x(n),\\
		\theta_1(n) &= \arg z(n) = \arctantwo \left(\Im z(n),\, \Re z(n)\right).
	\end{align}
	
	The endpoint value can be written as
	\begin{align}
		\hat{z}_\text{end}= z_\text{end}\,F
	\end{align}
	where
	\begin{align}
		z_\text{end} = \e^{\j(\omega_0(N-1)+\varphi_0)}
	\end{align}
	is the ideal analytic signal of the cosine at the endpoint and $F$ is the effective complex gain that collects the effects of finite windowing, analytic masking and bandpass filtering (see Section~\ref{sec:error-analysis}). Hence
	\begin{align}
		\begin{split}
			\theta_1(N-1)
			&= \arg \hat{z}_\text{end}\\
			&= \arg \{z_\text{end}F\}\\
			&= \omega_0(N-1) + \varphi_0 + \arg F.
		\end{split}
	\end{align}
	Thus, Method~1 preserves the linear phase progression $\omega_0 n + \varphi_0$ of the tone and only adds a phase error $\arg F$ which decomposes into a calibratable bias $\arg G_+$ and a leakage-induced ripple bounded by $\arcsin r$.
	
	\subsection{Method 2: Phase from the ratio to the original signal}
	
	A second definition, which has also appeared in the ecHT literature, replaces the real part of the analytic signal in the denominator by the original signal	$x$.
	\begin{align}
		\theta_2(n)
		&= \arctantwo \left(\Im z(n),\, x(n)\right)
	\end{align}
	For an ideal analytic signal, $\Re z(n) = x(n)$, so $\theta_1(n) = \theta_2(n)$. However, finite windowing and bandpass filtering distort the real and imaginary parts of $z(n)$ relative to the original signal. Using the endpoint representation
	\begin{align}
		\hat{z}_\text{end} &= z_\text{end}F\\
		z_\text{end} &= \cos\psi + \j \sin\psi\\
		\psi &= \omega_0(N-1)+\varphi_0.
	\end{align}
	Multiplying out the product using $F = F_\text{R} + \j F_\text{I}$
	\begin{align}
		\begin{split}
			\hat{z}_\text{end}
			&= (\cos\psi + \j\sin\psi)(F_R + \j F_I) \\
			&= F_R\cos\psi
			+ \j F_R\sin\psi
			+ \j F_I\cos\psi
			- F_I\sin\psi,
		\end{split}
	\end{align}
	one obtains
	\begin{align}
		\Re \hat{z}_\text{end}
		&= F_R\cos\psi - F_I\sin\psi,\\
		\Im \hat{z}_\text{end}
		&= F_R\sin\psi + F_I\cos\psi.
	\end{align}
	Using $\Re z_\text{end} = \cos\psi$, Method~2 yields
	\begin{align}
		\tan\theta_2(N-1)
		&= \frac{\Im \hat{z}_\text{end}}{x_\text{end}}
		= \frac{F_R \sin\psi + F_I \cos\psi}{\cos\psi}
		= F_R \tan\psi + F_I.
	\end{align}
	Consequently,
	\begin{align}
		\theta_2(N-1)
		= \arctantwo \big(F_R \sin\psi + F_I \cos \psi,\, \cos\psi\big),
	\end{align}
	which is a nonlinear function of $n$ (through $\psi = \omega_0(N-1) +	\varphi_0$) and of the initial phase $\varphi_0$. This shows that the effective ecHT gain $F$ causes a rotation and non-uniform scaling of the cosine-sine quadrature pair. In the ideal case	$F = 1$, the real and imaginary components recover the standard analytic signal representation.\\
	
	In contrast to Method~1, the phase evolution is no longer strictly linear, but is warped by $F$ which includes filter and window effects. For this reason, Method~2 is not recommended for ecHT phase estimation.
	

\end{document}